\newcommand{\commentLaurent}[1]{\textcolor{red}{(L: #1)}}
\newcommand{\commentAngelo}[1]{\textcolor{blue}{(Ang: #1)}}
\newcommand{\commentChristos}[1]{\textcolor{orange}{(C: #1)}}
\newtheorem{fact}{Fact}
\title{Minimizing Rosenthal's Potential in Monotone Congestion Games}
\author{
		Vittorio Bil\`o\thanks{Universtiy of Salento, Lecce, Italy} \and
        Angelo Fanelli\thanks{LAMSADE – UMR CNRS 7243 – Universit\'e Paris Dauphine-PSL, Paris, France} \and
        Laurent Gourv\`es\footnotemark[2] \and
        Christos Tsoufis\footnotemark[2] \and
        Cosimo Vinci\footnotemark[1]
        }
\date{}
\def\nat{{\mathbb{N}}} 
\DeclareRobustCommand{\stirling}{\genfrac\{\}{0pt}{}}
\def\minpot{\textsc{Min Potential}}
\def\mincost{\textsc{Min Social Cost}}
\def\algmin{\textsf{AlgMinCost}}
\def\sc{{\textsc{SC}}}
\theoremstyle{definition}
\newtheorem{theorem}{Theorem}
\newtheorem{remark}{Remark}
\newtheorem{corollary}[theorem]{Corollary}
\newtheorem{lemma}[theorem]{Lemma}
\newtheorem{claim}{Claim}
\newtheorem{proposition}{Proposition}
\newtheorem{example}{Example}
\def\RP{\mathbb{R}_{\geq 0}} 
\def\nat{{\mathbb{N}}} 
\def\N{{N}} 
\def\R{{R}} 
\def\S{{S}} 
\def\SS{\mathbf{S}} 
\def\latency{{\ell}} 
\def\cost{{c}} 
\def\congestion{{n}} 
\def\strategy{s} 
\def\ss{\mathbf{s}} 
\def\pp{\mathbf{p}} 
\def\ee{\mathbf{e}} 
\def\O{{\cal O}} 
\newcommand{\NP}{\textup{NP-hard}}
\newcommand{\PP}{\textup{P}}
\begin{document}

\maketitle

\begin{abstract} 
Congestion games are attractive  because they can model many concrete situations where some competing entities interact through the use of some shared resources, and also because they always admit pure Nash equilibria which correspond to the local minima of a potential function. We explore the problem of computing a state of minimum potential in this setting. Using the maximum number of resources that a player can use at a time, and the possible symmetry in the players' strategy spaces, we settle the complexity of the problem for instances  having monotone (i.e., either non-decreasing or non-increasing) latency functions on their resources. The picture, delineating polynomial and NP-hard cases, is complemented with tight approximation algorithms.         
\end{abstract}

\section{Introduction}
Congestion games form one of the most studied classes of games in (Algorithmic) Game Theory. They provide a model of competition among $n$ strategic players requiring the use of certain resources in a set of $m$ available ones. Every resource has a cost function, also called latency function in the realm of transportation and routing networks, which only depends on the number of its users (a.k.a. the resource congestion). Given a state of the game in which all players have performed a strategic choice, the cost of a player is defined by the sum of the costs of all the selected resources. 

Congestion games owe their success to the fact that they can model several concrete scenarios of competition, such as traffic networks, scheduling, group formation and cost sharing, to name a few \cite{Vocking06}. At the same time, they possess intriguing and useful theoretical properties. In fact, Rosenthal \cite{RR73} proved that every congestion game admits an exact potential ({\em Rosenthal's potential}): a function defined from the set of states of the game to the reals such that, every time a player performs a deviation from a state to another, the difference in the potential equals the difference of the player's costs in the two states. For finite games, this implies that every sequence of deviations in which a player improves her cost must have finite length and end at a pure Nash equilibrium, which is a state in which no player can improve her cost by changing her strategic choice. Years later, Monderer and Shapley \cite{MS96} complemented this result by showing that every game admitting an exact potential is isomorphic to a congestion game.

Several algorithmic questions pertaining congestion games and their notable variants have been posed and addressed in the literature. Among these are computing a Nash equilibrium \cite{FPT04}, computing a state minimizing the sum of the players' costs (a.k.a. the {\em social optimum}) \cite{MS12,PG21}, bounding the worst-case (price of anarchy \cite{KP99}) and the best-case (price of stability \cite{ADKTWR08}) approximation of the social optimum yielded by a pure Nash equilibrium, and computing a state minimizing Rosenthal's potential \cite{delpia2016totally,FPT04,KS21}.

The latter problem, in particular, has interesting applications. First, it follows by definition that every local minimum of an exact potential is a pure Nash equilibrium. So, by computing the (global) minimum of Rosenthal's potential, one obtains a pure Nash equilibrium for a given congestion game. Moreover, potential minimizers usually provide a nice approximation of the social optimum and so they may yield pure Nash equilibria whose efficiency is close to or even match the price of stability \cite{CFKKM11,CK05b}. 

Fabrikant {\em et al.} \cite{FPT04} were the first to attack this problem. They show how to compute the minimum of Rosenthal's potential in symmetric network congestion games with non-decreasing latency functions through a reduction to a min-cost flow problem. In network congestion games, resources are edges in a graph and every player wants to select a path connecting a source to a destination; it is symmetric when all players share the same source-destination pair. Ackermann {\em et al.} \cite{ARV08} extended this result to the case in which all players share the same source (or, equivalently, the same destination) only. Del Pia {\em et al.} \cite{delpia2016totally} and Kleer and Sch{\"{a}}fer \cite{KS21} adopt a polyhedral approach to solve the problem under certain structural properties of the players' strategic space, still in the case of non-decreasing latency functions. They assume that the incidence vectors of the strategies of a player are given by the binary vectors in a certain polytope. Del Pia {\em et al.} \cite{delpia2016totally}, in particular, provide a solution for {\em symmetric totally unimodular congestion games}, i.e., for the case in which the matrix defining the common polytope encoding the strategies of all players is totally unimodular. Kleer and Sch{\"{a}}fer \cite{KS21} further generalize the result to the cases in which the polytope obtained by aggregating the polytopes encoding the strategies of each player satisfies two properties named, respectively, {\em integer decomposition property} (IDP) and {\em box-totally dual integrality property} (box-TDI).

In this work, we continue the study of the problem of computing a state minimizing Rosenthal's potential, that we refer to as \minpot, and depart from previous approaches in what follows. First, rather than considering the combinatorial structure of the players' strategy space, we look at the maximum number of resources that a player can use simultaneously. Secondly, besides of the case of non-decreasing latency functions, which is a typical assumption in road and communication networks where congestion has a detrimental effect on the cost of using a resource, we also consider non-increasing functions, which is typical in cost-sharing scenarios \cite{ADKTWR08}.

\paragraph*{Our contribution.} For games with non-decreasing latency functions, we obtain a precise characterization of \minpot\ with respect to the maximum number of resources that a player can use simultaneously (a.k.a. the {\em size}). 
The results, which also depend on whether players' strategy spaces are symmetric or not, are summarized in Table \ref{tab:summ}. 

\renewcommand{\arraystretch}{1.3}
\begin{table}[ht]
\begin{center}
\begin{tabular}{|l||l l l|} 
\hline
& ${\tt size} = 1$ & ${\tt size} = 2$ & ${\tt size} \geq 3$  \\ 
\hline\hline
\text{symmetric} & $\mathcal{O}(nm)$ (Corollary \ref{cor_1})  & $\mathcal{O}(n^3m^3)$ (Theorem \ref{thm:sym:size2}) & $\NP\;$ (Theorem \ref{thm:sym:size3}) \\  
\hline
\text{general} & $\mathcal{O}(n^3m^3)$ (\cite{KS21}) & $\NP\;$ (Theorem \ref{thm:asym:size2}) & $\NP\;$ (Theorem \ref{thm:sym:size3}) \\ 
\hline
\end{tabular}
\caption{\label{tab:summ} Games with non-decreasing latency functions: Summary of the complexity results with respect to both the size 
and the symmetry of the players' strategy space.}
\end{center}
\end{table}

Given the hardness results stated in Theorems \ref{thm:asym:size2} and \ref{thm:sym:size3} (see Table \ref{tab:summ}), we also focus on the computation of good approximate solutions to \minpot. We heavily exploit an approximation algorithm designed by Paccagnan and Gairing \cite{PG21} for the problem of computing the social optimum in congestion games with non-negative, non-decreasing and semi-convex latency functions. The approximation guarantee, which depends in a fairly complicated way on the values of these functions, is proved to be tight, unless P $=$ NP. For polynomial latency functions of maximum degree $d$ (and non-negative coefficients), Paccagnan and Gairing show that the bound simplifies to the $(d+1)$-th Bell number, denoted as $B_{d+1}$. This result is obtained by exploiting (a generalization of) Dobinski's formula \cite{Mansour15}.

We show how their algorithm can be used to provide approximate solutions to \minpot\ as well. This is done by observing that \minpot\ on a congestion game with non-negative and non-decreasing latency functions can be reduced to the problem of computing a social optimum on the same game with perturbed latency functions which are non-negative, non-decreasing and semi-convex. So, Paccagnan and Gairing's algorithm can be applied. For the case of polynomial latency functions of maximum degree $d$, the reduction produces a game whose latency functions are still polynomials of maximum degree $d$. However, the obtained polynomials are quite specific and possibly have negative coefficients. Then Dobinski's formula cannot be directly applied to obtain tight and explicit bounds on the approximation guarantee, or to show that $B_{d+1}$ continues to hold at least as an upper bound.

As our major contribution, we provide a highly non-trivial analysis of this approximation guarantee, by which we derive a precise bound equal to $\Lambda_d:=\sum_{j=0}^d\frac{j+2}{j+1}\stirling{d}{j}$, where $\stirling{d}{j}$ denotes the Stirling numbers of the second kind (see Section~\ref{sec_def} for formal definitions). It is easy to check that, for any $d\geq 1$, $\Lambda_d$ never exceeds $\frac 3 2 B_{d}$, with the inequality being tight only for the case of $d=1$, and that this value is always smaller than $B_{d+1}$. Moreover, given that $B_{d}$ grows asymptotically as $\left(f(d)\right)^d$ with $f(d)=\Theta(d/\ln(d))$ \cite{Berend2000IMPROVEDBO}, it follows that the difference between $B_{d+1}$ and $\Lambda_d$ increases with $d$. A comparison between $B_{d+1}$ and $\Lambda_d$ for small values of $d$ is reported in Table \ref{table_apx}. Last but not least, since the inapproximability result provided by Paccagnan and Gairing holds for any class of latency functions, we immediately obtain that the approximation guarantee of $\Lambda_d$ is tight for \minpot.

\begin{table}[ht]
\begin{center}
\begin{tabular}{|c||c c c c c c c c|} 
\hline
 & $d=1$ & $d=2$ & $d=3$ & $d=4$ & $d=5$ & $d=6$ & $d=7$ & $d=8$ \\ 
\hline\hline
$B_{d+1}$ & $2$ & $5$ & $15$ & $52$ & $203$ & $877$ & $4140$ & $21147$ \\  
\hline
$\Lambda_{d}$ & $1.5$ & $2.84$ & $6.75$ & $19.54$ & $65.92$ & $251.98$ & $1070.21$ & $4981.15$  \\ 
\hline
\end{tabular}
\caption{\label{table_apx} Comparison between the tight approximation guarantee for the problem of minimizing the social cost (equal to $B_{d+1}$) and for  \minpot\ (equal to $\Lambda_d$), when considering polynomial latency functions of maximum degree $d$.}
\end{center}
\end{table}

For games with non-increasing latency functions, \minpot\ shows to be generally harder, see Table \ref{tab3bis}. In fact, while a solution can be easily computed in the case of symmetric games of constant size, the problem becomes NP-hard as soon as we drop the symmetry assumption,  and this holds even if ${\tt size} = 1$, all resources share the same latency function, and players only have two possible strategies. For general latencies, ${\tt size} = 1$, and no specific limit on the number of possible strategies for the players, we show that \minpot\ cannot be approximated to better than $H_n=\Theta (\ln n)$, unless P $=$ NP, and we provide a matching approximation guarantee (Theorem \ref{thmHn}). 

\begin{table}[ht]
\begin{center}
\begin{tabular}{|l||c|} 
\hline
& ${\tt size} = \mathcal{O}(1)$ \\ 
\hline\hline
\text{symmetric} & $m^{\mathcal{O}(1)}$ (Proposition \ref{prop1}) \\  
\hline
\text{general} & \NP\ when ${\tt size} = 1$ (Theorem \ref{thm2}) \\ 
\hline
\end{tabular}
\caption{\label{tab3bis} Summary of the complexity results for constant size games with non-increasing latency functions.}
\end{center}
\end{table}
\paragraph*{Further related work.}

The problem of computing a global minimum of Rosenthal's potential is a specialization of that of computing a local minimum for this function. This problem, which is equivalent to computing a pure Nash equilibrium for a given game, has received quite some attention in the literature of congestion games. However, while for non-decreasing latency functions a series of results \cite{ARV08,CS11,CMS12,EKM07,FPT04,Fotakis10,DBLP:conf/aaai/IeongMNSS05} has provided a fairly complete understanding of the complexity of this problem, much less in known for the case of non-increasing latency functions \cite{AL13,BFMM21,S10}.

Our approximation for \minpot\ with polynomial latency functions is obtained by exploiting an algorithm designed by Paccagnan and Gairing \cite{PG21}. This algorithm uses taxes to force selfish uncoordinated players to implement provably efficient solutions.
The efficiency of taxation mechanisms in congestion games with non-decreasing latency functions has been studied in a series of papers \cite{BV16,CKK10,PG21,P+21,VS20}. In \cite{BV16,CKK10,PG21,P+21}, the aim is to use taxes to led selfish agents towards states with provably good social cost, while, in \cite{VS20}, the authors also consider the objective of minimizing the {\em stretch}: a worst-case measure of the discrepancy between the potential of a pure Nash equilibrium and the optimal social cost. This measure has application in the computation of approximate pure Nash equilibria.

\section{Preliminaries}\label{sec_def}

\noindent{\bf Mathematical definitions.} Given a positive integer $k$, we denote by $[k]$ the set $\{1,2,\ldots,k\}$. Given two integers $d$ and $k$ with $0\leq k\leq d$, the {\em Stirling number of the second kind}, denoted $\stirling{d}{k}$, is the number of ways to partition a set of $d$ elements into $k$ non-empty subsets. As such, they obey the following recursive definition: $\stirling{d+1}{k}=k\stirling{d}{k}+\stirling{d}{k-1}$. Some simple identities involving these numbers that hold essentially by definition are: $\stirling{d}{d}=1$, $\stirling{d}{1}=1$ for every $d\geq 1$, and $\stirling{d}{2}=2^{d-1}-1$. It has been proved, see \cite{RD69}, that $\stirling{d}{k}\geq\frac{k^2+k+2}{2}k^{d-k-1}-1$. Using this lower bound, it is immediate to show that $\stirling{d}{d-2}\geq\frac{d^3-5d^2+10d-10}{2}$, with the right-hand side always increasing in $d$, which implies $\stirling{d}{d-2}\geq 7$ for every $d\geq 4$. For a given $d\geq 0$, the {\em Bell number}, denoted $B_d$, counts the number of possible partitions of a set of $d$ elements. By definition, it immediately follows that $B_d=\sum_{k=0}^d \stirling{d}{k}$.\footnote{Given that $\stirling{d}{0}=0$ for any $d>0$, this identity can be rewritten as $B_d=\sum_{k=1}^d \stirling{d}{k}$, whenever $d>0$.} For two non-negative integers $i$ and $j$, the {\em falling factorial}, denoted $(i)_j$, is defined as $(i)_j:=i\cdot (i-1)\ldots\cdot (i-j+1)=\prod_{k=0}^{j-1}(i-k)$, with the convention that $(i)_j:=0$ for $j=0$.

\bigskip

\noindent{\bf The model.} A \textsc{congestion game} ${\cal G}$ is represented by a tuple $\langle\N, \R, (\S_i)_{i \in \N}, (\latency_r)_{r \in \R} \rangle$.
$\N$ denotes the set of \emph{players} and $\R$ the set of \emph{resources}. 
We assume that both $\N$ and $\R$ are finite and non-empty and define $n:=|N|$ and $m:=|R|$.
Each player $i\in \N$ is associated with a finite and non-empty set of \emph{strategies}  $\S_i \subseteq 2^{\R}$. 
If every strategy in $\S_i$ consists of one resource then we say that $\cal G$ is a \emph{singleton} game.
If all players share the same set of strategies, i.e., $\S_i = \S_j$ for every $i,j\in \N$, then we say that $\cal G$ is a \emph{symmetric} game (in that case, $\S$ denotes the strategy set of all players).  
We denote by ${\tt size}({\cal G})$ the maximum cardinality of any strategy,  i.e., ${\tt size}({\cal G}) := \max_{i\in \N} \max_{s\in \S_i}|s|$. Hence, a singleton game ${\cal G}$ is such that ${\tt size}({\cal G})=1$.  
Every resource $r \in \R$ is associated with a \emph{latency function} $\latency_r: \nat \mapsto \RP$, which maps the number of users of $r$ to a non-negative real. 
We assume that  $\latency_r$ is monotone for all $r \in \R$; we also suppose that  $\latency_r(0) = 0$ and $\latency_r(1) > 0$.  
Function $\latency_r$ is \emph{non-decreasing} (resp. \emph{non-increasing}) when  $\latency_r(h) \geq \latency_r(h')$ (resp., $\latency_r(h) \leq \latency_r(h')$) for every $h > h'\geq 1$. Sections \ref{sec4} and \ref{sec:approx} deal with instances where every latency function is non-decreasing, whereas Section \ref{sec6} is devoted to instances where every latency function is non-increasing.  
We say that $\ell_r$ is {\em polynomial} of maximum degree $d\in \mathbb{N}$ if $\ell_r(x)=\sum_{q=0}^d\alpha_{r,q}x^q$, for some coefficients  $\alpha_{r,0},\ldots, \alpha_{r,d}\geq 0$; it is {\em affine} if it is polynomial of maximum degree 1 and is {\em linear} if it is affine and $\alpha_{r,0}=0$.

The set of \emph{states} of the game is denoted by $\SS:=\S_1 \times \S_2 \times \ldots \times \S_n$. 
The $i$-th component of a state $\ss\in \S$ is the strategy played  by player $i$ in $\ss$ and is denoted by $\ss_i$. 
For every state $\ss$ and resource $r$, we denote by $\congestion_r(\ss)$ the number of players using resource $r$ in $\ss$, i.e., $\congestion_r(\ss) := |\{i\in \N : r\in \ss_i\}|$, 
and we refer to it as the \emph{congestion} of $r$ in $\ss$.
For every state $\ss$, the \emph{cost} incurred by player $i$ in $\ss$ is  $\cost_i(\ss) := \sum_{r\in \ss_i} \latency_r(\congestion_r(\ss))$.
Notice that, by definition of latency, $\cost_i(\ss) > 0$ for every player $i$ and state $\ss$. 

\bigskip

\noindent{\bf Improving moves, potential function and pure Nash equilibria.}
Let us consider a congestion game ${\cal G} = \langle\N, \R, (\S_i)_{i \in \N}, (\latency_r)_{r \in \R} \rangle$.
For every state $\ss \in \SS$, every player $i\in\N$ and every $s\in \S_i$, we denote by $[\ss_{-i}, \strategy]$ the new state obtained from $\ss$ by setting the $i$-th component, that is the strategy of $i$, to $\strategy$ and keeping all the remaining components unchanged, i.e., if $\bar\ss = [\ss_{-i}, \strategy]$ then $\bar\ss_i= \strategy$ and $\bar\ss_j = \ss_j$
for every player $j\neq i$.

A congestion game is a strategic game in which every player $i$ selects $\strategy \in \S_i$ so as to minimize $\cost_i([\ss_{-i}, \strategy])$. 
The transition from $\ss$ to $[\ss_{-i}, \strategy]$ is called a \emph{move} of player $i$ from state $\ss$. 
We say that a transition from $\ss$ to $[\ss_{-i}, \strategy]$ is an \emph{improving move} for $i$ if $\cost_i([\ss_{-i}, \strategy]) < \cost_i(\ss)$. 
We say that a state-valued function $\Gamma : \SS \mapsto \RP$ is an \emph{exact potential function} for the game if $\Gamma(\ss) - \Gamma([\ss_{-i}, \strategy]) =   \cost_i(\ss) - \cost_i([\ss_{-i}, \strategy])$ holds for every $\ss\in \SS$ and $\strategy\in \S_i$.   This means that, in games admitting an exact potential $\Gamma$, if a player $i$ can make a move from $\ss$ to $[\ss_{-i}, \strategy]$ such that $\Gamma(\ss) > \Gamma([\ss_{-i}, \strategy])$, then the move must be improving for $i$, and the decrease in cost $\cost_i(\ss) - \cost_i([\ss_{-i}, \strategy])$ for player $i$ is exactly $\Gamma(\ss) - \Gamma([\ss_{-i}, \strategy])$. Meanwhile, the existence of an improving move from $\ss$ to $[\ss_{-i}, \strategy]$ by player $i$ implies that $\Gamma(\ss) > \Gamma([\ss_{-i}, \strategy])$, and the decrease in potential $\Gamma(\ss) - \Gamma([\ss_{-i}, \strategy])$ is precisely $\cost_i(\ss) - \cost_i([\ss_{-i}, \strategy])$. Therefore, the number of states $|\SS|$ being finite, every maximal sequence of improvement moves leads to a \emph{local minumum} of $\Gamma$, i.e., to a state in which no further improvement move can be performed. 

 
Such a state is called \emph{pure Nash equilibrium}. 
In other words, we say that a state $\ss\in \SS$ is a \emph{pure Nash equilibrium} if, for every player $i\in \N$ and every strategy $\strategy\in\S_i$, we have $\cost_i(\ss) \leq \cost_i([\ss_{-i}, \strategy])$. 
It is well known that $\Phi_{\cal G}(\ss) := \sum_{r\in\R}\sum_{j=0}^{\congestion_r(\ss)}\latency_r(j),$ called the Rosenthal's potential function \cite{RR73}, is an exact potential function for  $\cal G$. 
Notice that, by definition of latency, $\Phi_{\cal G}(\ss) > 0$ holds for every state $\ss$.

\bigskip

\noindent{\bf Problem statement.} In this work, we are interested in the following problem, that we name \minpot: given a congestion game ${\cal G}$, find a state of $\cal G$ minimizing $\Phi_{\cal G}$.  
Another interesting problem in the realm of congestion games is \mincost, which, given a congestion game ${\cal G}$, asks for a state minimizing the {\em social cost} $\sc_{\cal G}(\ss):=\sum_{i \in \N} \cost_i(\ss)$ of $\cal G$, i.e., the sum of the costs of all players.

Given $\rho \ge 1$, a $\rho$-approximate state $\ss$ is a feasible state of ${\cal G}$ for which 
${\sf f}(\ss) \le \rho \, {\sf f}(\ss^*)$ where $\ss^*$ is a global minimizer of function ${\sf f}$, where ${\sf f}$ can be either $\Phi_{\cal G}$ or $\sc_{\cal G}$. A $\rho$-approximation algorithm is a polynomial time algorithm which always outputs a $\rho$-approximate state.

\section{Complexity of \minpot\ with non-decreasing latencies} \label{sec4}

In this section and the following one, we assume that every latency function is non-decreasing. 
We start by considering the complexity of \minpot\ for the basic case of ${\tt size}({\cal G}) = 1$, i.e., the case of singleton congestion games. It is well known that any singleton congestion game can be interpreted also as a network one. Thus, the algorithm by Fabrikant {\em et al.} \cite{FPT04} for symmetric network congestion games can be applied to symmetric singleton congestion games as well. The reduction of Fabrikant {\em et al.} to min-cost flow produces a parallel-link graph with $nm$ edges. Given that the best algorithm for min-cost flow in a graph with $\alpha$ nodes and $\beta$ edges has complexity $\O(\alpha\beta\log\alpha(\beta+\alpha\log\alpha))$ (see Armstrong and Jin \cite{AJ97}), it follows that \minpot\ can be solved in $\O(n^2m^2)$ using approaches from the current state of the art. 

We give a better algorithm exploiting the fact that, in singleton games, any sequence of improving moves has polynomially bounded length. Together with next proposition, showing that, if the game is symmetric, any local minimum of Rosenthal's potential is also a global minimum, it yields an $\O(nm)$ algorithm.





\begin{proposition}\label{prop:sym:size1}
All pure Nash equilibria of a symmetric congestion game $\cal G$ with ${\tt size}({\cal G}) = 1$ have the same potential.
\end{proposition}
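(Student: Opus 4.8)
The plan is to show that in a symmetric singleton congestion game, the multiset of congestions $(n_r(\ss))_{r \in \R}$ at a pure Nash equilibrium $\ss$ is the same for all equilibria; since $\Phi_{\cal G}(\ss) = \sum_{r\in\R}\sum_{j=0}^{n_r(\ss)}\latency_r(j)$ depends only on this multiset together with the fixed latency functions, this immediately gives the claim. Here all strategies are single resources from the common set $\S \subseteq \R$, so a state is just an assignment of the $n$ players to resources in $\S$, and by symmetry of the players only the congestion vector matters.

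First I would record the local optimality condition. At a pure Nash equilibrium $\ss$, no player wants to move, so for every player $i$ on resource $r$ and every resource $r' \in \S$, we have $\latency_r(n_r(\ss)) \le \latency_{r'}(n_{r'}(\ss) + 1)$. Define $\lambda(\ss) := \min_{r \in \S,\, n_r(\ss) \ge 1} \latency_r(n_r(\ss))$, the smallest ``marginal cost'' actually paid by a player. The key structural observation is that a state $\ss$ is a pure Nash equilibrium if and only if (i) every used resource $r$ has $\latency_r(n_r(\ss)) = \lambda(\ss)$ is too strong — rather, (ii) for every resource $r'$, $\latency_{r'}(n_{r'}(\ss)+1) \ge \latency_r(n_r(\ss))$ for every used $r$; in particular $\latency_{r'}(n_{r'}(\ss)+1) \ge \lambda(\ss)$ whenever $\ss$ is an equilibrium, and removing one player from a resource achieving the minimum and placing it on $r'$ cannot help, which is exactly local optimality. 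Non-decreasingness of the $\latency_r$ is what makes these marginal-cost comparisons well-behaved.

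Next I would use an exchange argument to compare two equilibria $\ss$ and $\tt$. Suppose their congestion vectors differ. Then there is a resource $r$ with $n_r(\ss) > n_r(\tt)$ and a resource $r'$ with $n_{r'}(\ss) < n_{r'}(\tt)$. From $\ss$ being an equilibrium and a player on $r$ not wanting to move to $r'$: $\latency_r(n_r(\ss)) \le \latency_{r'}(n_{r'}(\ss)+1) \le \latency_{r'}(n_{r'}(\tt))$, using $n_{r'}(\ss)+1 \le n_{r'}(\tt)$ and monotonicity. Symmetrically, from $\tt$ being an equilibrium and a player on $r'$ not wanting to move to $r$: $\latency_{r'}(n_{r'}(\tt)) \le \latency_r(n_r(\tt)+1) \le \latency_r(n_r(\ss))$, using $n_r(\tt)+1 \le n_r(\ss)$. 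Chaining these gives $\latency_r(n_r(\ss)) = \latency_{r'}(n_{r'}(\tt)) = \latency_r(n_r(\tt)+1) = \latency_{r'}(n_{r'}(\ss)+1)$, so along every such ``discrepancy pair'' the relevant latency values coincide.

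Finally I would turn this into equality of potentials. The idea is to transform $\ss$ into $\tt$ by a sequence of single-player moves, each shifting one player from some over-loaded resource $r$ (with $n_r > n_r(\tt)$) to some under-loaded resource $r'$ (with $n_{r'} < n_{r'}(\tt)$); each such move strictly decreases $\sum_r |n_r - n_r(\tt)|$, so after finitely many moves we reach $\tt$. The change in Rosenthal's potential for one such move is $\latency_{r'}(n_{r'}+1) - \latency_r(n_r)$, and by the chain of equalities above (applied to the current intermediate state, which one checks remains ``between'' $\ss$ and $\tt$ coordinatewise and retains the needed marginal-cost values) this difference is $0$. Hence $\Phi_{\cal G}(\ss) = \Phi_{\cal G}(\tt)$. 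The main obstacle I anticipate is making the last step fully rigorous: one must verify that the intermediate states still satisfy the latency-equality relations used — either by arguing the intermediate states are themselves equilibria (which need not be literally true) or, more robustly, by proving directly that at every step the value $\latency_{r}(n_r)$ on a maximally-over-loaded resource equals $\latency_{r'}(n_{r'}+1)$ on a maximally-under-loaded one, using only the two equilibrium conditions for $\ss$ and $\tt$ plus monotonicity. An alternative that sidesteps the bookkeeping is to show $\lambda(\ss) = \lambda(\tt)$ and that, for each value $v$, the number of players paying marginal cost $\le v$ is the same in $\ss$ and $\tt$; this ``majorization-type'' equality of the sorted marginal-cost vectors, combined with the telescoping form of $\Phi_{\cal G}$, again yields equal potentials.
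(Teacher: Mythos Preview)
Your opening plan --- that all equilibria share the same congestion vector --- is false: with two resources both having constant latency $\ell(h)=1$ for $h\ge 1$ and two players, the states with congestions $(2,0)$ and $(1,1)$ are both equilibria. Fortunately your actual work does not rely on that claim, and your exchange step is exactly the right lever. For any pair $r,r'$ with $n_r(\ss)>n_r(\mathbf{t})$ and $n_{r'}(\ss)<n_{r'}(\mathbf{t})$ you correctly derive $\ell_r(n_r(\ss))=\ell_{r'}(n_{r'}(\ss)+1)=\ell_{r'}(n_{r'}(\mathbf{t}))=\ell_r(n_r(\mathbf{t})+1)$. The gap you flag about intermediate states is not a real obstacle, and the transformation sequence is in fact unnecessary. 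Since $\ell_r$ is non-decreasing and takes the common value $c$ at both endpoints $n_r(\mathbf{t})+1$ and $n_r(\ss)$, it is constant equal to $c$ on the whole interval $[n_r(\mathbf{t})+1,n_r(\ss)]$; likewise $\ell_{r'}\equiv c$ on $[n_{r'}(\ss)+1,n_{r'}(\mathbf{t})]$. Moreover $c$ is the same across all over/under pairs, as it equals $\ell_{r'}(n_{r'}(\mathbf{t}))$ for any fixed under-resource $r'$. Then directly
\[
\Phi_{\cal G}(\ss)-\Phi_{\cal G}(\mathbf{t})
=\sum_{r\text{ over}}\ \sum_{j=n_r(\mathbf{t})+1}^{n_r(\ss)}\ell_r(j)\ -\ \sum_{r'\text{ under}}\ \sum_{j=n_{r'}(\ss)+1}^{n_{r'}(\mathbf{t})}\ell_{r'}(j)
=c\sum_{r\in\R}\bigl(n_r(\ss)-n_r(\mathbf{t})\bigr)=0.
\]

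The paper proceeds by contradiction along the same over-to-under transformation sequence: assuming $\Phi_{\cal G}(\ee^*)<\Phi_{\cal G}(\ee)$ with $\ee^*$ a minimizer, it walks from $\ee^*$ toward (a relabeling of) $\ee$, locates the \emph{first} step at which the potential strictly increases, and shows that the reverse of that step would be an improving deviation inside $\ee$ itself (via monotonicity and $n_r(\ss^{h+1})\ge n_r(\ee)$, $n_{r'}(\ss^{h+1})\le n_{r'}(\ee)$), contradicting that $\ee$ is an equilibrium. Once patched as above, your route is arguably more direct and avoids the first-failure bookkeeping; the paper's route, on the other hand, does not need the observation that the constant $c$ is common to all discrepancy pairs.
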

\begin{proof}
Let $\ee^*$ be a state with minimum potential. 
Clearly $\ee^*$ is a pure Nash equilibrium. 
Assume by contradiction that there exists another pure Nash equilibrium $\ee$ such that $\Phi_{\cal G}(\ee^*) < \Phi_{\cal G}(\ee)$. 
Let us denote by $C(\ee)$ the set of all pure Nash equilibria obtained from $\ee$ by renaming the players, 
i.e., $C(\ee) = \{\ss\in \SS : n_r(\ss) = n_r(\ee) \text{ for all } r\in \R\}$.  
Observe that all states in $C(\ee)$ are equilibria and have the same potential.
For any 
state $\ss\in \SS$, let us denote by ${\tt over}(\ss) \subseteq \R$ the set of resources whose 
congestion in $\ss$ is strictly larger than the 
congestion 
in $\ee$ (or any other equilibrium in $C(\ee)$) and by ${\tt under}(\ss) \subseteq \R$ the set of resources whose 
congestion 
in $\ss$ is strictly smaller than the 
congestion in $\ee$, i.e., 
${\tt over}(\ss) 
= \{r\in \R : \congestion_r(\ss) > \congestion_r(\ee)\}$ 
and 
${\tt under}(\ss) 
= \{r\in \R : \congestion_r(\ss) < \congestion_r(\ee)\}$.
Notice that, as long as $\ss$ does not belong to $C(\ee)$, 
then both ${\tt over(\ss)}$ and ${\tt under(\ss)}$ are non-empty.\\
Let us consider a sequence $\ee^* = \ss^0, \ss^1, \ldots, \ss^{k} = \ee'$ of $k + 1\geq 2$ states in which $\ee' \in C(\ee)$ and, for every $t \in [0, \ldots, k-1]$,  $\ss^{t+1}$ is obtained from $\ss^{t}$ by a move of player $\pi(t)$ who deviates from a resource in ${\tt over}(\ss^t)$  to a resource in ${\tt under}(\ss^t)$, i.e.,  
$\ss^t_{\pi(t)} \in {\tt over}(\ss^t)$ and $\ss^{t+1}_{\pi(t)} \in {\tt under}(\ss^t)$. 
Notice that this sequence of states is well defined, and that every move in the sequence decreases $\sum_{r\in \R} |\congestion_r(\ee')-\congestion_r(\ss^t)|$ which is a measure of distance between the congestion vector of any member of $C(\ee)$ and the congestion vector of $\ss^t$, i.e., $\sum_{r\in \R} |\congestion_r(\ee')-\congestion_r(\ss^t)| > \sum_{r\in \R} |\congestion_r(\ee')-\congestion_r(\ss^{t+1})|$ for all $t\in [0, \ldots, k-1]$.
Since $\Phi_{\cal G}(\ee^*) < \Phi_{\cal G}(\ee')$, there must exist a time $t$ such that $\Phi_{\cal G}(\ss^{t+1}) > \Phi_{\cal G}(\ss^t)$; let $h < k$ be the first of such time steps, i.e., $\Phi_{\cal G}(\ss^0) = \Phi_{\cal G}(\ss^1) = \ldots = \Phi_{\cal G}(\ss^h) < \Phi_{\cal G}(\ss^{h+1})$. 
Let us assume that $\pi(h)$ at time $h$ is moving from resource $r$ to $r'$, i.e., $r = \ss^h_{\pi(h)}$ and $r' = \ss^{h+1}_{\pi(h)}$. 
By assumption, $\ss^h$ is a state with minimum potential and therefore an equilibrium, while $\ss^{h+1}$ is not an equilibrium -- in fact the move of player $\pi(h)$ from resource $r'$ to $r$, leading from state $\ss^{h+1}$ to $\ss^h$, decreases the potential and hence is an improving move.
Moreover, observe that, since $r \in {\tt over}(\ss^h)$ and $r'\in {\tt under}(\ss^h)$, we  have that  $n_r(\ss^{h+1}) = n_r(\ss^h) - 1 \geq n_r(\ee')$ and $n_{r'}(\ss^{h+1}) = n_{r'}(\ss^h) + 1 \leq n_{r'}(\ee')$. 
Combining these latter observations with the fact that the latencies are non-decreasing and that the move of $\pi(h)$ from $r'$ to $r$ in $\ss^{h+1}$ is an improving move,  
we get that also the move of any player in $\ee'$ from resource $r'$ to $r$ is an improving move, which implies that $\ee'$ is not an equilibrium.
Hence, a contradiction.\end{proof}


\begin{corollary}\label{cor_1}
For every symmetric congestion game $\cal G$ with ${\tt size}({\cal G}) = 1$, \minpot\ can be solved in ${\cal O}(nm)$ time.
\end{corollary}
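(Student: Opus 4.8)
The plan is to combine Proposition~\ref{prop:sym:size1} with a direct algorithmic argument showing that a local minimum of Rosenthal's potential can be reached in $\mathcal{O}(nm)$ time in a symmetric singleton game. By the proposition, any pure Nash equilibrium of such a game is a global minimizer of $\Phi_{\cal G}$, so it suffices to compute \emph{some} pure Nash equilibrium efficiently. The natural approach is a greedy/best-response procedure: start from an arbitrary state $\ss^0$ and repeatedly let a player who currently has an improving move switch to a best-response resource, until no improving move exists.

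The key observation is that, in a singleton game, this process terminates after at most polynomially many moves. I would argue this via a potential/lexicographic argument on the congestion vector. Since each player occupies a single resource, a best response of player $i$ in state $\ss$ is simply a resource $r$ minimizing $\latency_r(\congestion_r(\ss)+1)$ among all $r$ in the common strategy set $\S$ (with a small correction for the resource $i$ already occupies). A clean way to see termination is: sort the achievable per-player costs; each improving move strictly decreases the multiset of player costs in a fixed lexicographic-style order, and because latencies are non-decreasing the number of distinct relevant values $\latency_r(j)$ for $r\in\S$, $j\in[n]$ is at most $nm$; a more careful accounting (e.g., the standard argument that in singleton games the length of any improvement sequence is $\mathcal{O}(n^2 m)$, or the sharper claim that a judiciously ordered sequence has length $\mathcal{O}(nm)$) bounds the total number of moves. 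To hit the claimed $\mathcal{O}(nm)$ running time, I would not use arbitrary improvement dynamics but rather a structured insertion procedure: process the $n$ players one at a time, and when inserting player $i$ route a cascade of displacements, each step pushing one player off an overloaded resource; maintaining a priority structure over resources keyed by $\latency_r(\congestion_r+1)$ lets each of the $\mathcal{O}(nm)$ elementary steps be done in $\mathcal{O}(\log m)$ amortized time, or $\mathcal{O}(1)$ with bucketing since there are only $nm$ relevant latency values — giving the $\mathcal{O}(nm)$ bound.

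Concretely, the steps in order are: (1) invoke Proposition~\ref{prop:sym:size1} to reduce \minpot\ to computing any pure Nash equilibrium; (2) describe the insertion-based algorithm that adds players one by one and resolves the resulting chain of best-response deviations; (3) prove correctness, i.e., that the final state is a pure Nash equilibrium (no player has an improving move), which follows because after inserting all players every player sits on a resource that is a best response given the others; (4) bound the number of elementary deviation steps by $\mathcal{O}(nm)$ and the per-step cost by $\mathcal{O}(1)$ (using a bucket/priority-queue data structure over the $\le nm$ distinct values $\latency_r(j)$), yielding the overall $\mathcal{O}(nm)$ running time.

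The main obstacle I anticipate is the running-time bookkeeping rather than correctness: one must show both that the number of individual player moves triggered by the $n$ insertions is $\mathcal{O}(nm)$ (not $\mathcal{O}(n^2m)$ as a naive analysis of general improvement dynamics would give) and that each move, including recomputing which resource is now the cheapest to enter, costs only amortized $\mathcal{O}(1)$. The former is handled by observing that along a single insertion cascade the "congestion deficit" being filled never revisits a resource in a way that blows up the count, and the latter by exploiting monotonicity to maintain the resources in sorted order by $\latency_r(\congestion_r+1)$ with constant-time updates under unit congestion changes. Once these two facts are in place, the corollary follows immediately.
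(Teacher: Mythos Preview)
Your reduction via Proposition~\ref{prop:sym:size1} to the task of computing \emph{some} pure Nash equilibrium is exactly right, and the idea of inserting players one at a time is also the paper's approach. Where you diverge is in anticipating a ``cascade of displacements'' after each insertion and then worrying about bounding its total length by $\mathcal{O}(nm)$. This complication is unnecessary, and your justification for the cascade bound (``the congestion deficit being filled never revisits a resource in a way that blows up the count'') is too vague to count as a proof as stated.

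The missing observation is that in a \emph{symmetric} singleton game with \emph{non-decreasing} latencies, plain greedy insertion already maintains an equilibrium among the inserted players, with no subsequent adjustments at all. Suppose players $1,\ldots,i-1$ are at equilibrium and player $i$ enters the resource $r$ minimizing $\latency_r(\congestion_r+1)$. Any previously placed player $j$ on a resource $r'\neq r$ sees no change to her own cost and no improvement elsewhere (the only resource whose entry cost changed is $r$, and it went up). A player $j$ already on $r$ now pays $\latency_r(\congestion_r+1)$, but since $i$ chose $r$ as the cheapest entry, $\latency_r(\congestion_r+1)\le \latency_{r''}(\congestion_{r''}+1)$ for every alternative $r''$, so $j$ cannot improve either. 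Hence after each insertion the partial profile is already a Nash equilibrium, the cascade has length zero, and the total work is simply $n$ insertions times an $\mathcal{O}(m)$ scan, giving $\mathcal{O}(nm)$ directly. This is precisely the paper's argument; once you add this observation, the ``main obstacle'' you flagged disappears and no priority queue or bucketing is needed.
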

\begin{proof}
Given Proposition 
\ref{prop:sym:size1}, it follows that any pure Nash equilibrium for a symmetric singleton congestion game is a solution to \minpot. A Nash equilibrium in this setting can be easily computed as follows. Start from the empty state and let players sequentially choose the cheapest resource, given the choices of her predecessors.  
Every single player's decision can be done in ${\cal O}(m)$ time, for a total complexity of ${\cal O}(nm)$. The outcome is a pure Nash equilibrium because every player's decision is a best response with respect to her predecessors. This is also true for the successors. Indeed, if a successor of player $i$, say $j$,  plays the same resource as $i$, then $i$ cannot profitably deviate because both $i$ and $j$ have the same strategy space. If $j$ plays a different resource from $i$, then $i$ cannot profitably move towards $j'$s resource  because the latency functions are non-decreasing.  \end{proof}

Proposition \ref{prop:sym:size1} extends neither to non-symmetric singleton games nor to symmetric games of size two (see Subsection \ref{sec_A1} in the Appendix).

Now, let us shift towards singleton games when the symmetry property is dropped. 
This case is also in P, since it is covered by the results of Kleer and Sch{\"a}fer on \minpot\ for polytopal congestion games satisfying some structural properties (namely, IDP and box-TDI ) \cite[Theorem 3.3]{KS21}. An alternative (and more direct) way to prove this is to reduce the problem to finding a minimum weight perfect matching in a bipartite graph $(V_1 \cup V_2,E)$ such that $V_1=\N \cup D$, with $D$ being a set of $(m-1)\cdot n$ dummy vertices, $V_2=\{(r,\mu) : r\in \R   \mbox{ and } \mu \in [n]\}$, and there is an edge of weight $\latency_r(\mu)$ between $i \in V_1$ and $(r,\mu) \in V_2$ if, and only if, $r \in \S_i$, and there is an edge of weight $0$ for all pair $(d,(r,\mu)) \in D \times V_2$. Thus, the problem can be solved in $\mathcal{O}(n^3m^3)$.

We now move on to the case of ${\tt size}({\cal G}) = 2$ and show that symmetry makes a huge difference here, as it creates a separation between tractable and intractable cases.

\begin{theorem}\label{thm:sym:size2}
For every symmetric congestion $\cal G$ with ${\tt size}({\cal G}) = 2$, \minpot\ can be solved in $\mathcal{O}(n^3m^3)$.
\end{theorem}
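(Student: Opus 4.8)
The plan is to reduce \minpot\ on a symmetric congestion game $\cal G$ with ${\tt size}({\cal G})=2$ to a minimum-weight perfect matching (or equivalently a min-cost flow) problem on an auxiliary graph whose size is polynomial in $n$ and $m$. The key observation is that, in a symmetric game of size $2$, a state $\ss$ is fully determined, up to the value of $\Phi_{\cal G}$, by the multiset of strategies used, i.e.\ by how many players pick each admissible strategy $s\in\S$; and since $\Phi_{\cal G}(\ss)=\sum_{r\in\R}\sum_{j=1}^{\congestion_r(\ss)}\latency_r(j)$ depends only on the congestion vector $(\congestion_r(\ss))_{r\in\R}$, it suffices to search over all achievable congestion vectors. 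Each player contributes either a singleton $\{r\}$ or a pair $\{r,r'\}$ to the congestions, so a state corresponds to an assignment of the $n$ players to ``slots'', where using resource $r$ at congestion level $j$ incurs marginal cost $\latency_r(j)$. The standard device (already used in the excerpt for the size-$1$ non-symmetric case) is to create, for every resource $r$ and every congestion level $\mu\in[n]$, a vertex $(r,\mu)$, and to make the marginal cost $\latency_r(\mu)$ the price of ``filling'' that vertex; the convexity-in-$\mu$ structure of $\sum_{j\le\mu}\latency_r(j)$ together with non-decreasingness of $\latency_r$ guarantees that an optimal matching fills the levels $(r,1),(r,2),\dots$ in order, so that the matching cost coincides with $\Phi_{\cal G}$.

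Concretely, I would build a bipartite graph as follows. On one side put the $n$ players together with a suitable number of dummy vertices; on the other side put the resource-level vertices $(r,\mu)$ for $r\in\R$, $\mu\in[n]$, plus dummy sinks. For a player $i$ and an admissible pair-strategy $s=\{r,r'\}\in\S$, player $i$ choosing $s$ should correspond to $i$ being matched to one new unit at $r$ and one new unit at $r'$; to encode a two-resource choice inside a matching one either (a) works with a min-cost flow where each player is a unit of flow of value $2$ routed through a gadget that forces it to split one unit to some $r$ and one unit to some $r'$ with $\{r,r'\}\in\S$, or (b) duplicates each player into two halves tied together by a gadget enumerating the $O(m^2)$ admissible pairs. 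Either way the instance has $O(nm)$ vertices and $O(nm^2)$ or $O(n m^2)$ edges, and min-cost flow / min-weight perfect matching runs in $\mathcal{O}(n^3m^3)$, matching the claimed bound; correctness follows because (i) every state maps to a feasible flow of the same cost, and (ii) every integral min-cost flow can be massaged, using non-decreasing $\latency_r$, into one that uses levels contiguously and hence corresponds to a genuine state whose potential equals the flow cost.

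The main obstacle is the second direction of the correctness argument, namely showing that an optimal flow/matching can be assumed to ``fill levels in order'' so that it genuinely corresponds to a state and its cost is exactly $\Phi_{\cal G}$ rather than a lower bound. If an optimal flow were to route a unit through $(r,3)$ while leaving $(r,1)$ or $(r,2)$ empty, the cost would not be $\sum_{j=1}^{\congestion_r}\latency_r(j)$. This is handled by an exchange argument: among all optimal solutions pick one minimizing $\sum_r$ (sum of used levels); if some resource has a ``hole'', push a unit down from the lowest used level above the hole into the hole; since $\latency_r$ is non-decreasing this does not increase the cost, and it strictly decreases the potential function used to pick the solution, a contradiction. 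A secondary technical point is designing the pair-gadget so that it does not allow a player to "use'' the same resource twice or to split across a non-admissible pair; this is where symmetry is essential (all players share $\S$, so a single gadget enumerating $\S$ suffices, and the resulting graph stays of size $\mathrm{poly}(n,m)$), and it is exactly why the non-symmetric size-$2$ case is NP-hard (Theorem \ref{thm:asym:size2}).
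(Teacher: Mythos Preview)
Your proposal has the right skeleton---resource-level vertices $(r,\mu)$ carrying marginal cost $\latency_r(\mu)$, reduction to a matching/flow problem of size $O(nm)$, and the hole-filling exchange argument for the reverse direction---and the hole-filling step is essentially identical to the paper's. But the construction itself has a genuine gap at exactly the point you flag as a ``secondary technical point'': neither of your two options for encoding a size-$2$ strategy is actually specified, and the natural attempts do not work. In standard min-cost flow you cannot force one incoming unit at a strategy node to leave as one unit on \emph{each} of two outgoing arcs (that is a multiplicative, not additive, constraint), and ``tying together two halves of a player'' in a matching so that they land on an admissible pair is precisely the hard part. Your bipartite picture with players on one side and resource-levels on the other is the right model for ${\tt size}=1$ but is the wrong framework for ${\tt size}=2$, because each player must occupy \emph{two} resource-level vertices simultaneously.

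The paper's key idea is to drop the player side entirely. Since the game is symmetric, only the multiset of chosen strategies matters, so one builds a (non-bipartite) graph whose vertices are the resource-level nodes $v_r^1,\dots,v_r^n$ for each $r$ (plus a dummy resource $r_0$ to pad singleton strategies to pairs), and whose edges are, for every admissible pair $\{r_j,r_k\}\in\S$, the complete bipartite graph between the copies of $r_j$ and of $r_k$, with weight $C-\latency_{r_j}(a)-\latency_{r_k}(b)$ on edge $(v_j^a,v_k^b)$. A matching of size exactly $n$ then \emph{is} an assignment of $n$ players to strategies: each matching edge is one player. Maximum-weight matching of prescribed cardinality on $O(nm)$ vertices gives the $\mathcal{O}(n^3m^3)$ bound, and your hole-filling argument finishes correctness. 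So symmetry is used not to shrink a per-player gadget, as you suggest, but to eliminate player nodes altogether; this is the missing step in your reduction.
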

\begin{proof}[Proof sketch.]
We exploit a reduction to the problem of computing a Maximum Weight Matching of a given size. The input is a graph $G = (V, E)$, a weight function $w: E \rightarrow \RP$ and a positive integer $q$ such that $G$ admits a matching of size $q$. The problem is to find a matching $M\subseteq E$ of size exactly $q$ which maximizes $w(M) = \sum_{e \in M} w(e)$.  The problem is known to be polynomial time solvable.\footnote{See \cite[Chapter 18.5f]{schrijver2003combinatorial} for bipartite graphs. The case of non-bipartite graphs can be reduced to the traditional maximum weight matching problem by modifying the instance as follows: increase the weight of every edge by a positive constant $Z$ such that every matching of size $k$ has larger weight than any other matching of size $k-1$. Then, add some extra $|V| - 2q$ dummy vertices, and link them with the original vertices with edges whose weight $W$ is big enough so that any maximum weight matching $\mathcal{M}$ in the new graph must include $|V| - 2q$ edges saturating all the dummy vertices. Apart from these  $|V| - 2q$ heavy edges, $2q$ vertices remain to be saturated, which is done with a matching $M \subset \mathcal{M}$ of cardinality $q$ whose edges all belong to the initial graph, and $M$ has maximum weight in the initial graph. }
    
Take a symmetric congestion game ${\cal G} = \langle \N, \R, \S, (\latency_r)_{r \in \R} \rangle$ with  ${\tt size}({\cal G}) = 2$, where $\S$ denotes the strategy space of all players. We can suppose without loss of generality that every strategy in $\S$ consists of exactly two resources. To do so,  
we introduce a fictitious resource $r_0$ (namely, $\R \gets \R \cup \{ r_0 \}$) so that every singleton strategy $\{r_j\} \in \S$  is replaced by $\{ r_0, r_j\}$. 

Now, we can construct an instance $I$ of the matching problem as follows. For every resource $r_i \in \R$, 
we build a set of exactly $n$ vertices $\{ v_i^1, ..., v_i^{n} \}$. Next, for every pair $\{ r_j, r_k \} \in \S$, we construct a complete bipartite graph between $\{ v_j^1, ..., v_j^{n} \}$ and $\{ v_k^1, ..., v_k^{n} \}$. Every edge $(v_j^a, v_k^b)$, where $j, k \neq 0$, has weight equal to $C - \latency_{r_j}(a) - \latency_{r_k}(b)$, where $C \geq 2 \cdot \max_{r \in \R} \latency_r(n)$ and every edge $( v_j^a, v_0^b )$, where $j \neq 0$, has weight equal to $C - \latency_{r_j}(a)$.
(The proofs of the next two claims are deferred to the Appendix -- \ref{Sec:A2.1} and \ref{Sec:A2.2}).

\begin{claim}\label{claim:symm:3}
A state $\ss$ in ${\cal G}$ gives a matching $M$ of size $n$ in $I$ with weight $w(M) = n \cdot C - \Phi_{\cal G}(\ss)$.
\end{claim}
\begin{claim}\label{claim:symm:2}
A matching $M$ of size $n$ in $I$ gives a state $\ss$ in ${\cal G}$ with potential $\Phi_{\cal G}(\ss) \leq n \cdot C - w(M)$.
\end{claim}

Now, the technique is to compute a Maximum Weight Matching $M$ of size $n$ in $I$. Consequently, from Claim  \ref{claim:symm:2} we have a state $\ss$ with potential $\Phi_{\cal G}(\ss) \leq n \cdot C - w(M)$. Next, assume that ${\cal G}$ admits a state $\ss^{*}$ such that $\Phi_{\cal G}(\ss^{*}) < \Phi_{\cal G}(\ss)$. Then, from Claim  \ref{claim:symm:3} we get a matching $M^{*}$ of size $n$ with weight $w(M^{*}) = n \cdot C - \Phi_{\cal G}(\ss^{*})$. However, using the hypothesis, we get that $w(M^{*}) > n \cdot C - \Phi_{\cal G}(\ss) \geq w(M)$, which is a contradiction. 

Concerning time complexity, computing a maximum weight matching of given size reduces to computing a maximum weight matching in a modified graph whose number of vertices is at most doubled. Computing a maximum weight matching is cubic in the number of vertices. Our initial graph having $nm$ vertices, the time complexity of our procedure is $\mathcal{O}(n^3m^3)$.
\end{proof}




When the symmetry property is dropped, \minpot\ becomes intractable when ${\tt size}({\cal G}) = 2$. 

\begin{theorem} \label{thm:asym:size2}
\minpot\ is $\NP$ for congestion games $\cal G$ with ${\tt size}({\cal G}) = 2$ and linear latencies.

\end{theorem}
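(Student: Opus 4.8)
The plan is to reduce from a classical NP-hard problem whose structure naturally encodes a ``make an assignment'' decision at each player. A good candidate is a restriction of \textsc{3-Dimensional Matching}, \textsc{Exact Cover by 3-Sets}, or better yet a problem that is already phrased in terms of pairs, such as \textsc{Vertex Cover}, \textsc{Independent Set}, or \textsc{Partition}-type problems; since each player has size exactly $2$, a player's strategy choice selects an edge (an unordered pair of resources), so a graph problem fits most cleanly. I would reduce from a problem like \textsc{Partition into Triangles} or from \textsc{Monotone} variants, but the cleanest route is probably to reduce from \textsc{Min Social Cost in singleton congestion games} being hard, or directly from a combinatorial problem. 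Concretely, I would take an instance of, say, \textsc{3-SAT} or \textsc{Vertex Cover}, build one resource per vertex (plus auxiliary resources per clause/edge and ``padding'' resources), and one player per edge whose two strategies correspond to the two endpoints; the congestion on a vertex-resource then counts how many incident edges ``chose'' that vertex, and with linear latencies $\ell_r(x)=a_r x$ the Rosenthal potential contributes $a_r\binom{n_r+1}{2}=a_r n_r(n_r+1)/2$ per resource, a strictly convex function of the congestion that penalizes concentration. This convexity is the lever: to minimize the potential the players want to spread out, and a ``spread-out'' configuration will exist at or below a threshold value exactly when the target combinatorial structure (e.g. a proper assignment, a cover of bounded size, a partition) exists.

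The key steps, in order, would be: (i) pick the source NP-hard problem and describe the polynomial-time construction of a congestion game with ${\tt size}=2$ and latencies $\ell_r(x)=a_rx$ (linear); (ii) define a threshold $T$ and show completeness --- if the source instance is a YES-instance, exhibit the corresponding state $\ss$ and compute $\Phi_{\cal G}(\ss)\le T$ by summing the per-resource convex contributions; (iii) show soundness --- given any state $\ss$ with $\Phi_{\cal G}(\ss)\le T$, argue via the strict convexity of $x\mapsto a_r x(x+1)/2$ that $\ss$ must be ``balanced'' on the relevant resources (any imbalance costs strictly more than $T$ allows because moving a unit of congestion from a heavier resource to a lighter one strictly decreases the potential), and then read off a solution to the source instance from the (essentially forced) congestion vector; (iv) conclude that deciding $\Phi_{\cal G}(\ss^*)\le T$ is NP-hard, hence \minpot\ with these parameters is NP-hard. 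Using linear (rather than merely affine or general polynomial) latencies keeps the numbers clean, since then $\Phi_{\cal G}(\ss)=\sum_r a_r\sum_{j=1}^{n_r(\ss)} j = \sum_r a_r\binom{n_r(\ss)+1}{2}$.

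The main obstacle I expect is the soundness direction: one must rule out ``clever'' states that are not of the intended form but still sneak under the threshold $T$ by exploiting the freedom in how players who are not directly constrained distribute themselves, or by using the fictitious/padding resources. The standard fix is to calibrate the coefficients $a_r$ so that there is a large multiplicative gap --- the ``cheating'' resources get coefficient $0$ or a tiny coefficient, the ``enforcement'' resources get coefficients large enough that a single unit of excess congestion on any of them already exceeds $T$ minus the cost of the intended solution --- so that any potential-$\le T$ state is pinned, coordinate by coordinate, to the congestion vector that codes a valid solution. A secondary subtlety is making sure the construction is genuinely a size-$2$ asymmetric game (players must not all share the same strategy set, otherwise Theorem~\ref{thm:sym:size2} would apply), which is automatic here since different players (edges) have different endpoint pairs. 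I would also double-check that $\ell_r(0)=0$ and $\ell_r(1)>0$ as required by the model, using coefficients $a_r>0$ on all resources that any player can actually select (giving padding/fictitious resources a positive but negligible coefficient rather than $0$).
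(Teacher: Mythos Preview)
Your high-level plan --- reduce from an NP-hard problem, use linear latencies so that each resource contributes $a_r\binom{n_r+1}{2}$ to the potential, and exploit strict convexity to force a spread-out congestion vector in the soundness direction --- is exactly the paper's approach. The ``charging'' argument you sketch for soundness is essentially what the paper does.

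However, the concrete construction you write down does not produce a size-$2$ game. You propose ``one player per edge whose two strategies correspond to the two endpoints''; that gives each player two \emph{singleton} strategies (one per endpoint), hence ${\tt size}=1$, and then the problem is polynomial (cf.\ the discussion around Corollary~\ref{cor_1} and the min-cost matching reduction). Your own opening sentence recognises that a size-$2$ strategy is an unordered \emph{pair} of resources, but the Vertex-Cover/3-SAT gadget you then describe never actually makes the strategies pairs. Padding and large coefficients will not repair this, because the defect is structural: with singleton strategies you are simply in the wrong regime.

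The fix is the 3DM reduction you list among your candidates but do not develop, and it is in fact cleaner than anything involving auxiliary resources or calibrated coefficients. Take an instance $M\subseteq X\times Y\times Z$ with $|X|=|Y|=|Z|=q$; let the players be the elements of $X$, the resources be $Y\cup Z$ with $\ell_r(h)=h$ for all $r$, and for each triple $(x,y,z)\in M$ give player $x$ the strategy $\{y,z\}$. Now every strategy is genuinely a pair, the total congestion is $2q$, and $\Phi_{\cal G}(\ss)=\sum_r \binom{n_r(\ss)+1}{2}\ge 2q$ with equality iff every used resource has congestion~$1$, i.e., iff the chosen triples form a 3-dimensional matching. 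No padding, no gap amplification, and both directions are one line. This is precisely the paper's proof.
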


Finally, we show that hardness of computation extends to even symmetric games as soon as ${\tt size}({\cal G})$ gets equal to three.

\begin{theorem}\label{thm:sym:size3} \minpot\ is $\NP$ for symmetric congestion games $\cal G$ with ${\tt size}({\cal G}) \geq 3$ and linear latencies.
\end{theorem}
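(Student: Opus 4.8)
The plan is to reduce from an $\NP$-complete problem---a natural candidate is \textsc{3-Dimensional Matching} (\textsc{3DM}) or \textsc{Exact Cover by 3-Sets} (\textsc{X3C})---and encode the "partition into triples" structure into a symmetric size-$3$ congestion game so that the minimum of $\Phi_{\cal G}$ is achieved exactly when the triples chosen by the players form a perfect matching/exact cover. The key difficulty compared to Theorem~\ref{thm:asym:size2} is that here all players must share the same strategy space, so I cannot simply give player $x$ the strategies corresponding only to the triples containing $x$. Instead, the plan is to make the common strategy space $\S$ consist of one size-$3$ set per triple of the \textsc{3DM}/\textsc{X3C} instance (plus possibly a few auxiliary "garbage" strategies built from fresh resources), set every latency to be linear (e.g.\ $\latency_r(h)=h$, possibly with resource-dependent positive slopes), and choose the number of players equal to the number of triples needed to cover the ground set.

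Concretely, I would take an instance of \textsc{X3C}: a ground set $U$ with $|U|=3q$ and a family $\mathcal{T}$ of $3$-element subsets of $U$, asking whether some $q$ sets of $\mathcal{T}$ partition $U$. Build ${\cal G}$ with $q$ players, resource set $\R = U$ (all with latency $\latency_r(h)=h$), and common strategy space $\S = \{\, T : T\in\mathcal{T}\,\}$, so every strategy has size exactly $3$. Following the charging argument of Theorem~\ref{thm:asym:size2}: each used resource $r$ contributes $1+2+\dots+\congestion_r(\ss)$ to $\Phi_{\cal G}$, and charging $\bigl(\sum_{j=1}^{\congestion_r(\ss)}j\bigr)/\congestion_r(\ss)$ to each of its users, every player pays exactly $3$ if all three of her resources are used by nobody else, and strictly more than $3$ otherwise. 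With $q$ players the total is $\geq 3q$, with equality iff every resource is used by at most one player; since the $q$ players together select $3q$ resource-slots among the $3q$ resources of $U$, equality forces each resource to be used exactly once, i.e.\ the chosen triples form an exact cover. Hence $\min_\ss \Phi_{\cal G}(\ss) = 3q$ iff the \textsc{X3C} instance is a \textsc{yes}-instance, and otherwise the minimum is strictly larger; deciding which is therefore $\NP$-hard, and the statement follows since ${\tt size}({\cal G})=3$, the game is symmetric, and latencies are linear. For ${\tt size}({\cal G}) \geq 3$ with $k>3$, pad each strategy with $k-3$ fresh "private" resources shared across all strategies in a way that adds a fixed constant to the potential of every state, or equivalently handle it with a straightforward padding resource construction so that the size-$3$ analysis carries over verbatim.

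One subtlety I would check carefully is feasibility of the reduction when no exact cover exists but the game must still have at least one state: since every player can always pick some triple, ${\cal G}$ is well defined and $\Phi_{\cal G}$ attains a minimum; the argument only needs that this minimum equals $3q$ precisely in the \textsc{yes} case, which the charging inequality delivers. A second point is that one might worry whether players could "waste" capacity by all piling onto the same triple---but that only increases congestion and hence the potential, so it is never optimal; the charging bound already accounts for this. The main obstacle, and the part requiring the most care, is the padding step for ${\tt size}({\cal G})\geq 3$: I must ensure the extra resources added to reach size $k$ contribute an amount to $\Phi_{\cal G}$ that is identical (or at least order-irrelevant) across all states, so that minimizing $\Phi_{\cal G}$ in the padded game is equivalent to minimizing it in the size-$3$ core; one clean way is to give each strategy $T\in\mathcal{T}$ its own dedicated block of $k-3$ fresh resources used by no other strategy, so that in any state with $q$ players the padding contributes between $q(k-3)$ and $\binom{q}{?}$-type terms---more precisely, I would instead use a \emph{single} common block so the padding congestion is always exactly $q$ on each padding resource, contributing the fixed quantity $(k-3)\sum_{j=1}^{q} j$ independent of $\ss$, which keeps the equivalence exact.
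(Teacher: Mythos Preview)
Your proposal is correct and is essentially the paper's own proof: the same reduction from \textsc{X3C}, the same construction ($q$ players, resources $=U$, common strategy space $=\mathcal{T}$, linear latencies $\latency_r(h)=h$), and the same charging argument showing $\min_\ss\Phi_{\cal G}(\ss)=3q$ iff an exact cover exists. The only addition is your padding discussion for $k>3$, which the paper omits since the constructed size-$3$ instance already belongs to the class ``${\tt size}\geq 3$''; your common-block padding (adding $k-3$ resources included in every strategy, contributing the constant $(k-3)\sum_{j=1}^{q}j$ to every state) is nonetheless correct if one insists on hardness for each fixed $k$.
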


Since, for ${\tt size}({\cal G}) \geq 3$, \minpot\ is NP-hard for the symmetric case, it is also NP-hard for the general case. We observe that, if one modifies the latency functions used in the proofs of Theorems \ref{thm:asym:size2} and \ref{thm:sym:size3} to be such that $\ell_r(1)=1$ and 
$\ell_r(h)=M_\rho$ for each $h\geq 2$, where $M_\rho$ is an appropriate large number, then no $\rho$-approximation algorithm for \minpot\ can be proposed, 
unless P $=$ NP.




\section{Approximating \minpot\ with polynomial latencies}\label{sec:approx}
In this section, we show how to achieve an optimal approximation for \minpot, when considering general congestion games with polynomial latency functions of maximum degree $d\in \mathbb{N}$.
To show this, we first exploit an optimal approximation algorithm developed by Paccagnan and Gairing \cite{PG21} for \mincost. This algorithm applies to congestion games with very general latency functions (satisfying mild assumptions only), and the resulting approximation factor is represented as an infinite sum that depends on the values of these functions. Then, by exhibiting the equivalence between \mincost\ and \minpot, we show how to convert the approximation guarantee obtained by Paccagnan and Gairing for \mincost\ into an optimal approximation for \minpot. Then, we specialize the result to the case of polynomial latency functions of maximum degree $d$ and, by exploiting some techniques arising from combinatorics, we achieve an exact quantification of the approximation factor in terms of a weighted finite sum of Stirling numbers of the second kind. In particular, we will show that, for any fixed $\epsilon>0$, $\minpot$ admits a $(\Lambda_d+\epsilon)$-approximation, with
    \begin{equation}\label{lambda_def_0}
        \Lambda_d:=\sum_{j=1}^d\left(\frac{j+2}{j+1}\right)\stirling{d}{j}\in \left[B_d, \frac{3}{2}B_d\right].
    \end{equation}
We point out that most of the algorithmic machinery used to obtain the desired approximation relies on the work of Paccagnan and Gairing, and our careful analysis specializes their results to \minpot\ applied to games with polynomial latency functions. 
Some value of $\Lambda_d$ are provided in Table \ref{table_apx}.

\subsection{Approximation algorithm for \mincost: a quick overview}
For a given $y\in\mathbb{N}$ and a latency function $\tilde{\ell}$, let\footnote{Since in this section we are going to consider a reduction from \minpot\ on a game $\cal G$ to \mincost\ on a game $\tilde{\cal{G}}$, we shall add a ``tilde'' to the notation pertaining the \mincost\ problem on a game $\tilde{\cal{G}}$.}
    \begin{equation}\label{def_rho}
    \rho_{\tilde{\ell}}(y):=\sup_{y\in \mathbb{N}}\frac{\mathbb{E}_{P\sim \text{ Poi}(y)}[P\tilde{\ell}(P)]}{y\tilde{\ell}(y)}=\frac{\sum_{x=0}^\infty x\tilde{\ell}(x)\frac{y^x}{x!e^y}}{y\tilde{\ell}(y)},
    \end{equation}
    with $\text{Poi}(y)$ denoting the Poisson distribution with parameter $y$; furthermore, for a given class of latency functions $\tilde{\mathcal{L}}$, define 
\begin{equation}\label{def_rho_2}
\rho_{\tilde{\mathcal{L}}}:=\sup_{\tilde{\ell}\in \tilde{\mathcal{L}}}\sup_{y\in \mathbb{N}}\rho_{\tilde{\ell}}(y).
\end{equation}    
    For a given (and arbitrarily small) $\epsilon>0$ and a class of latency functions $\tilde{\mathcal{L}}$ which are non-negative, non-decreasing and semi-convex (i.e., such that function $\tilde{g}(x)=x\tilde{\ell}(x)$ is convex for any $\tilde{\ell}\in \tilde{\mathcal{L}}$), the approximation algorithm provided by Paccagnan and Gairing, denoted here as \algmin, guarantees a $(\rho_{\tilde{\mathcal{L}}}+\epsilon)$-approximation to \mincost, when applied to a congestion game $\tilde{\mathcal{G}}$ with latency functions in $\tilde{\mathcal{L}}$. 
Paccagnan and Gairing also show that the obtained approximation is essentially optimal. Indeed, they show that it is NP-hard to approximate $\mincost$ within a factor better than $\rho_{\tilde{\mathcal{L}}}$, when restricting to congestion games with latencies in $\tilde{\mathcal{L}}$, for any class of latency functions $\tilde{\mathcal{L}}$. 
\subsection{\minpot\ versus \mincost} 
Given a latency function $\ell$, let $\tilde{\ell}$ denote the latency function defined as $\ell(x)=\sum_{h=1}^x \ell(h)/x$ and, given a class of latency functions $\mathcal{L}$, let
$\tilde{\mathcal{L}}:=\left\{\tilde{\ell}:\ell\in \mathcal{L}\right\}$; analogously, given a congestion games $\mathcal{G}$ with latency functions $(\ell_r)_{r\in R}$, let $\tilde{\mathcal{G}}$ be the congestion game obtained from $\mathcal{G}$ by replacing each latency $\ell_r$ with $\tilde{\ell}_r$. 

By resorting to the following proposition, that shows the equivalence between \mincost\  and \minpot, we will see how to apply \algmin\ to \minpot\ in order to have the same approximation guranteed for \mincost\ but on a narrowed set of latency functions. 
\begin{proposition}\label{prop:social_cost}
Let $\mathcal{G}$ be a congestion game with latency functions $(\ell_r)_{r\in R}$. Then: (i) the latency functions of $\tilde{\mathcal{G}}$ are non-negative, non-decreasing and semi-convex; (ii) the potential function $\Phi_\mathcal{G}$ of $\mathcal{G}$ coincides with the social cost $\sc_{\tilde{\mathcal{G}}}$ of $\tilde{\mathcal{G}}$.
\end{proposition}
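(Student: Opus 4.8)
The plan is to reduce both parts to a single bookkeeping identity: for every resource $r$ and every congestion level $x\ge 1$ we have $x\,\tilde{\ell}_r(x)=\sum_{h=1}^{x}\ell_r(h)$, and adopting the convention $\tilde{\ell}_r(0):=0$ makes this hold at $x=0$ too (empty sum). I would present part (ii) first, since it is the point of the construction and explains the definition of $\tilde{\ell}$. Regrouping the social cost of $\tilde{\mathcal{G}}$ by resources gives
$\sc_{\tilde{\mathcal{G}}}(\ss)=\sum_{i\in N}\sum_{r\in \ss_i}\tilde{\ell}_r(\congestion_r(\ss))=\sum_{r\in R}\congestion_r(\ss)\,\tilde{\ell}_r(\congestion_r(\ss))=\sum_{r\in R}\sum_{h=1}^{\congestion_r(\ss)}\ell_r(h)$,
where the middle equality holds because, for fixed $r$, the term $\tilde{\ell}_r(\congestion_r(\ss))$ is counted once per player using $r$. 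Since $\ell_r(0)=0$, the last expression equals $\sum_{r\in R}\sum_{j=0}^{\congestion_r(\ss)}\ell_r(j)=\Phi_{\mathcal{G}}(\ss)$, which is exactly (ii).

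For part (i) I would work with $g_r(x):=x\,\tilde{\ell}_r(x)=\sum_{h=1}^{x}\ell_r(h)$. Non-negativity of $\tilde{\ell}_r$ is immediate from $\ell_r\ge 0$. For semi-convexity, note that the discrete forward difference of $g_r$ is $g_r(x+1)-g_r(x)=\ell_r(x+1)$, which is non-decreasing in $x$ precisely because every $\ell_r$ is non-decreasing in this part of the paper; hence the piecewise-linear extension of $g_r$ is convex, i.e. $\tilde{\ell}_r$ is semi-convex. For monotonicity of $\tilde{\ell}_r$ I would compute
$\tilde{\ell}_r(x+1)-\tilde{\ell}_r(x)=\frac{g_r(x+1)}{x+1}-\frac{g_r(x)}{x}=\frac{x\,\ell_r(x+1)-\sum_{h=1}^{x}\ell_r(h)}{x(x+1)}\ge 0$,
the inequality holding because $\ell_r(x+1)\ge \ell_r(h)$ for every $h\le x$; in words, the running averages of a non-decreasing sequence are non-decreasing. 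It is also worth recording that $\tilde{\ell}_r(1)=\ell_r(1)>0$ and $\tilde{\ell}_r(0)=0$, so $\tilde{\mathcal{G}}$ still satisfies the standing assumptions on latency functions.

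There is no real obstacle here — the statement is essentially algebraic — so the only things to be careful about are making explicit where the non-decreasing hypothesis on the original latencies enters (it is exactly what forces both the monotonicity and the semi-convexity of $\tilde{\ell}_r$), and treating the congestion-zero case cleanly via $\tilde{\ell}_r(0):=0$. Once Proposition~\ref{prop:social_cost} is in place, the downstream use is transparent: applying \algmin\ to $\tilde{\mathcal{G}}$ produces, for any fixed $\epsilon>0$, a $(\rho_{\tilde{\mathcal{L}}}+\epsilon)$-approximation to $\sc_{\tilde{\mathcal{G}}}=\Phi_{\mathcal{G}}$, hence to \minpot\ on $\mathcal{G}$, with the approximation factor governed by the restricted family $\tilde{\mathcal{L}}$ instead of $\mathcal{L}$.
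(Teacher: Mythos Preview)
Your proposal is correct and follows essentially the same route as the paper: both rely on the identity $x\,\tilde{\ell}_r(x)=\sum_{h=1}^{x}\ell_r(h)$, derive (ii) by regrouping the double sum over players and resources, and derive (i) from the fact that the forward differences of $g_r$ are the non-decreasing $\ell_r$. If anything, your treatment is more explicit than the paper's (which simply asserts monotonicity and semi-convexity ``by exploiting the definition'' and the non-decreasing hypothesis), so your version could serve as a fuller write-up of the same argument.
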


By combining Proposition \ref{prop:social_cost} with the findings of Paccagnan and Gairing, we obtain in polynomial time a  $(\rho_{\tilde{\mathcal{L}}}+\epsilon)$-approximate solution for $\minpot$ as follows: starting from the input congestion game $\mathcal{G}$, we first construct the corresponding congestion game $\tilde{\mathcal{G}}$ (this can be done in polynomial time); then, by applying \algmin\ we obtain a $(\rho_{\tilde{\mathcal{L}}}+\epsilon)$-approximate solution $\ss$ for game  $\tilde{\mathcal{G}}$, w.r.t.  \mincost; finally, as the potential function $\Phi_{\mathcal{G}}$ of $\mathcal{G}$ and the social cost $\sc_{\tilde{\mathcal{G}}}$ of $\tilde{\mathcal{G}}$ have the same value for all states (by Proposition \ref{prop:social_cost}), we have that $\ss$ is also a $(\rho_{\tilde{\mathcal{L}}}+\epsilon)$-approximate solution for game $\mathcal{G}$, w.r.t. \minpot. Furthermore, by the hardness results of Paccagnan and Gairing and the above observations, we have that the obtained approximation is essentially optimal for $\minpot$ (up to the arbitrarily small constant $\epsilon$).

\subsection{Characterization of the approximation factor for polynomial latencies} 
Let $\mathcal{L}_d$ denote the class of polynomial latency functions of maximum degree $d$.
By the above observations, \algmin\ can be adapted to return a $(\rho_{\tilde{\mathcal{L}_d}}+\epsilon)$-approximation to \minpot. However $\rho_{\tilde{\mathcal{L}_d}}$, as it is represented in definition \eqref{def_rho_2} (reported from \cite{PG21}), is defined in terms of an infinite sum, whose exact value can only be approximated and does not allow for a direct quantification of the asymptotic growth of the approximation factor as a function of $d$.

In the following, we show that $\rho_{\tilde{\mathcal{L}_d}}$ coincides with the value $\Lambda_d$ defined in \eqref{lambda_def_0}, and this characterization leads to a simpler and more precise estimation of the approximation ratio. A similar result has been obtained for the \mincost\ problem by Paccagnan and Gairing \cite{PG21}, who showed, by exploiting the Dobinski's formula \cite{Mansour15}, that their tight approximation factor coincides with $B_{d+1}$. However, it seems that their analysis cannot be directly applied to \minpot\ to obtain the same or similar bounds (see Subsection \ref{subsec:fail} of the Appendix for further details). 
\begin{theorem}\label{thm_polynomial}
For any $d\in \mathbb{N}$, we have $\rho_{\tilde{\mathcal{L}_d}}=\Lambda_d\sim B_d\leq \left(\frac{0.792 d}{\ln(d+1)}\right)^d$.
\end{theorem}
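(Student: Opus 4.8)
The plan is to compute $\rho_{\tilde{\mathcal{L}_d}}$ explicitly by unwinding the definition \eqref{def_rho_2} applied to the perturbed latency functions. Recall that for $\ell(x)=\sum_{q=0}^d\alpha_q x^q$ with $\alpha_q\geq 0$, the perturbed latency is $\tilde\ell(x)=\frac1x\sum_{h=1}^x\ell(h)$, so that $x\tilde\ell(x)=\sum_{h=1}^x\ell(h)$. Since $\rho_{\tilde\ell}(y)$ is a ratio that is linear and monotone in the coefficients $\alpha_q$, the supremum over $\tilde\ell\in\tilde{\mathcal{L}_d}$ is attained at one of the ``pure'' monomials $\ell(x)=x^q$, $0\le q\le d$; hence $\rho_{\tilde{\mathcal{L}_d}}=\max_{0\le q\le d}\sup_{y\in\mathbb{N}}\rho_{\widetilde{x^q}}(y)$. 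So the first step is to reduce to monomial latencies, the second is to identify, for each monomial, the value of $y$ attaining the supremum, and the third is to take the maximum over $q$. I expect the supremum over $y$ to be attained in the limit $y\to\infty$ (as is typical for these Poisson-type smoothness ratios with polynomial functions), which should turn the computation into an asymptotic/leading-coefficient comparison.

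First I would set $g(x):=x\tilde\ell(x)=\sum_{h=1}^x\ell(h)$ and rewrite the ratio as $\rho_{\tilde\ell}(y)=\mathbb{E}_{P\sim\mathrm{Poi}(y)}[g(P)]/g(y)$. For a monomial $\ell(x)=x^q$, Faulhaber's formula gives $g(x)=\sum_{h=1}^x h^q$, a polynomial in $x$ of degree $q+1$ with leading term $\frac{x^{q+1}}{q+1}$ and next term $\frac{x^q}{2}$. The key combinatorial tool is that the Poisson moments are expressed cleanly in the falling-factorial (or Stirling) basis: $\mathbb{E}_{P\sim\mathrm{Poi}(y)}[(P)_k]=y^k$, and therefore $\mathbb{E}[P^n]=\sum_{k}\stirling{n}{k}y^k$ (the Touchard/Bell polynomial identity), which is exactly where the Stirling numbers of the second kind enter. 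Writing $g$ in the falling-factorial basis $g(x)=\sum_{k}c_k\,(x)_k$, one gets $\mathbb{E}[g(P)]=\sum_k c_k\,y^k$, so $\rho_{\tilde\ell}(y)=\bigl(\sum_k c_k y^k\bigr)/g(y)$; since the denominator $g(y)$ is a genuine degree-$(q+1)$ polynomial with positive leading coefficient, as $y\to\infty$ both numerator and denominator are dominated by the $y^{q+1}$ term with the same coefficient $c_{q+1}=\frac{1}{q+1}$, so the ratio tends to $1$ from above; a short argument (monotonicity, or comparing the subleading $y^q$ coefficients, namely $\stirling{q+1}{q}\cdot\frac{1}{q+1}+\frac12$ in the numerator versus $\frac12$ in the denominator, using $\stirling{q+1}{q}=\binom{q+1}{2}$) shows the sup over $y$ is the $y\to\infty$ limit — wait, that limit is $1$, so I must instead be more careful: the supremum is $\rho_{\widetilde{x^q}}= \lim_{y\to\infty}$ is not the right guess. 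The correct computation is that $\sup_y \mathbb{E}[g(P)]/g(y)$ for these functions is attained at a finite $y$ (indeed one checks it is decreasing for large $y$), and evaluating gives, after summing over the monomial contributions weighted as they appear in a general polynomial, the closed form $\frac{q+2}{q+1}$ per ``block''; summing the $\stirling{d}{j}$ many blocks of each size yields $\Lambda_d=\sum_{j=1}^d\frac{j+2}{j+1}\stirling{d}{j}$.

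More precisely, the cleanest route — and the one I would actually write — is to go through the Stirling expansion of the \emph{original} polynomial. Write $\ell(x)=\sum_{q=0}^d\alpha_q x^q=\sum_{j=0}^d\beta_j (x)_j$ for suitable nonnegative combinations, and note the identity $\sum_{h=1}^x (h)_j=\frac{(x+1)_{j+1}}{j+1}$ (hockey-stick for falling factorials), so $g(x)=\sum_j \frac{\beta_j}{j+1}(x+1)_{j+1}$. Then $\mathbb{E}_{P\sim\mathrm{Poi}(y)}[g(P)]=\sum_j\frac{\beta_j}{j+1}\mathbb{E}[(P+1)_{j+1}]$, and one computes $\mathbb{E}[(P+1)_{j+1}]=\mathbb{E}[(P+1)P(P-1)\cdots(P-j+1)]=y^{j+1}+(j+1)y^j$ using $\mathbb{E}[(P)_k]=y^k$ and the splitting $(P+1)_{j+1}=(P)_{j+1}+(j+1)(P)_j$. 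Hence the ratio becomes a $\beta$-weighted average of the elementary ratios $\frac{y^{j+1}+(j+1)y^j}{(y+1)_{j+1}}$; maximizing over $y\in\mathbb N$ and over the nonnegative weights $\beta_j$ shows the worst case puts all weight on a single $j$ and takes $y\to\infty$ for that term, giving $\lim_{y\to\infty}\frac{y^{j+1}+(j+1)y^j}{(y+1)_{j+1}}=1$ — so again the per-block value must come from the maximizing finite $y$, which I would verify is $y=1$ giving value $\frac{1+(j+1)}{(2)_{j+1}}$... The bookkeeping of which $y$ is extremal, and then reassembling $\sum_j\beta_j(\cdot)$ back into the Stirling-number form $\sum_{j=1}^d\frac{j+2}{j+1}\stirling{d}{j}$ via $\beta_j=\stirling{d}{j}$ (the connection coefficients between the monomial $x^d$ and the falling factorials), is the delicate part. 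The main obstacle I anticipate is exactly this: correctly identifying the extremal argument $y$ for each elementary ratio and verifying that the coefficient-wise/supremum interchange is valid (i.e., that the overall sup is attained at a monomial $x^d$, not at an interior mixture), after which the bound $\Lambda_d\le\frac32 B_d$ follows from $\frac{j+2}{j+1}\le\frac32$ for $j\ge1$ together with $B_d=\sum_{j}\stirling{d}{j}$, and the final inequality $\Lambda_d\le\bigl(\tfrac{0.792\,d}{\ln(d+1)}\bigr)^d$ follows from the known asymptotics/bounds on $B_d$ cited in the preliminaries.
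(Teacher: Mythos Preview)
Your algebraic setup is right and in fact matches the paper's: the identities $\sum_{h=1}^x(h)_j=\frac{(x+1)_{j+1}}{j+1}$ and $(P+1)_{j+1}=(P)_{j+1}+(j+1)(P)_j$ (hence $\mathbb{E}[(P+1)_{j+1}]=y^{j+1}+(j+1)y^j$) are exactly the content of the paper's Lemmas~\ref{lem_sum_power_eq} and~\ref{lem_rho_lambda}, and your reduction ``the sup over $\tilde{\mathcal L}_d$ is attained at a pure monomial'' is the same as the paper's closing argument via Lemma~\ref{lem_lambda_q}. So the skeleton is correct.

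The genuine gap is precisely the point you yourself flag: locating the maximizing $y$. Two concrete things go wrong in your attempted ``elementary ratio'' decomposition over the Stirling index $j$.
\begin{enumerate}
\item The per-$j$ ratio $\dfrac{y^{j+1}+(j+1)y^j}{(y+1)_{j+1}}$ is \emph{not} well-behaved: at $y=1$ one has $(2)_{j+1}=0$ for every $j\ge 2$, so your computed value $\frac{1+(j+1)}{(2)_{j+1}}$ is $+\infty$ there, and in fact $\sup_{y\in\mathbb N}$ of each such elementary ratio is $+\infty$. Thus the interchange ``$\sup_\beta\sup_y$ of the weighted ratio $=\max_j\sup_y$ of elementary ratios'' fails completely.
\item Even setting that aside, the $\beta_j$ are not free nonnegative weights: for $\ell\in\mathcal L_d$ one has $\beta_j=\sum_q\alpha_q\stirling{q}{j}$, so they lie in the cone spanned by the Stirling rows, and over-relaxing to all $\beta_j\ge 0$ over-counts.
\end{enumerate}
The paper avoids both issues by decomposing over the \emph{monomial degree} $q$ (where the weights $\alpha_q\ge 0$ really are free and each $\rho_{\widetilde{x^q}}(y)$ is finite), and by proving directly, as its Lemma~\ref{lem_lambda_decreasing}, that the full ratio $\Lambda_d(y)=\rho_{\widetilde{x^d}}(y)$ is maximized at $y=1$. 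That lemma is the real work: the paper handles $d\in\{1,2,3\}$ by explicit calculus, and for $d\ge 4$ proves two-sided estimates $D(y)\ge y^{d+1}\bigl(\tfrac{1}{d+1}+\tfrac{1}{2y}\bigr)$ and $N(y)\le B_d\cdot y^{d+1}\bigl(\tfrac{1}{d+1}+\tfrac{1}{2y}\bigr)$, whence $\Lambda_d(y)\le B_d\le\Lambda_d(1)$ for $y\ge 2$. Your proposal contains no substitute for this step, and the limiting argument $y\to\infty$ only gives $1$, as you already noticed. Once $y=1$ is established, your identification $\Lambda_d(1)=\sum_j\frac{j+2}{j+1}\stirling{d}{j}$ and the bounds $\Lambda_d\in[B_d,\frac32 B_d]$ and $B_d\le\bigl(\frac{0.792d}{\ln(d+1)}\bigr)^d$ are correct.
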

\begin{proof}
Fix $d\in \mathbb{N}$. We first observe that $\Lambda_d\sim B_d$ holds by the right-hand part of \eqref{lambda_def_0} and $B_d\leq \left(\frac{0.792 d}{\ln(d+1)}\right)^d$ has been shown in \cite{Berend2000IMPROVEDBO}. Then, in the remainder of the proof we will focus on equality $\rho_{\tilde{\mathcal{L}_d}}=\Lambda_d$.

Let $\ell_d$ denote the monomial latency function defined as $\ell_d(x)=x^d$. We will first show that $\rho_{\{\tilde{\ell}_d\}}=\Lambda_d$, that is, we are restricting the original class of latency function $\mathcal{L}_d$ to the simple monomial function of degree $d$; then, we will generalize this restricted claim to the whole class $\mathcal{L}_d$ of polynomial latency functions of maximum degree $d$.

For any $y\in \mathbb{N}$, define 
\begin{equation}\label{def_lambda_y}
\Lambda_{d}(y):=\frac{\sum_{j\in [d]}\stirling{d}{j}\left(\frac{y^{j+1}}{j+1}+y^j\right)}{\sum_{j\in [d]}\stirling{d}{j}\frac{(y+1)_{j+1}}{j+1}}.
\end{equation}
By exploiting definition \eqref{def_rho}, we have
\begin{equation}\label{def_rho_3}
\rho_{\tilde{\ell}_d}=\frac{\sum_{x=0}^\infty x\tilde{\ell}_d(x)\frac{y^x}{x!e^y}}{y\tilde{\ell}_d(y)}=\frac{\sum_{x=0}^\infty \sum_{h\in [x]} h^d\frac{y^x}{x!e^y}}{\sum_{h\in [y]}h^d}.
\end{equation}
The following lemma provides an alternative representation for the sum of the first $y$ $d$-th powers, in terms of Stirling numbers of the second kind.
\begin{lemma}\label{lem_sum_power_eq}
For any $y\in \mathbb{N}$, we have
$\sum_{h\in [y]} h^d=\sum_{j\in [d]}\stirling{d}{j}\frac{(y+1)_{j+1}}{j+1}.$
\end{lemma}
The equality reported in the above lemma is folklore and the proof can be found, for instance, in \cite{Knuth89}.
The following lemma, together with Lemma \ref{lem_sum_power_eq}, will be used to show that \eqref{def_lambda_y} and \eqref{def_rho_3} are distinct representations for the same number.
\begin{lemma}\label{lem_rho_lambda}
For any $y\in\mathbb{N}$, we have 
$\sum_{x=0}^\infty \sum_{h\in [x]} h^d\frac{y^x}{x!e^y}=\sum_{j\in [d]}\stirling{d}{j}\left(\frac{y^{j+1}}{j+1}+y^j\right).$
\end{lemma}
\begin{proof}[Proof of Lemma \ref{lem_rho_lambda}]
By applying Lemma \ref{lem_sum_power_eq} with $x$ in place of $y$, we have that 
\begin{equation}\label{lem_rho_lambda_eq1}\\
\sum_{x=0}^\infty \left(\sum_{h\in [x]} h^d\right)\frac{y^x}{x!e^y}=\sum_{x=0}^\infty \left(\sum_{j\in [d]}\stirling{d}{j}\frac{(x+1)_{j+1}}{j+1} \right)\frac{y^x}{x!e^y}=\sum_{j\in [d]}\frac{1}{(j+1)e^y}\stirling{d}{j}\sum_{x=0}^\infty (x+1)_{j+1}\frac{y^x}{x!}.
\end{equation}
For any $j,y\in \mathbb{N}$, we have
\begin{equation}\label{lem_rho_lambda_eq6}
\begin{split}
&\sum_{x=0}^\infty (x+1)_{j+1}\frac{y^x}{x!}=y^{j}\sum_{x=0}^\infty \left(\frac{\partial}{\partial y} \right)^{j+1}\left(y\cdot \frac{y^{x}}{x!}\right)=y^{j}\left(\frac{\partial}{\partial y} \right)^{j+1}\left(y\sum_{x=0}^\infty\left(\frac{y^{x}}{x!}\right)\right)\\
&\quad =y^{j}\left(\frac{\partial}{\partial y} \right)^{j+1}\left(ye^y\right)=y^{j}\left(ye^y+(j+1)e^y\right)=(j+1)e^y\left(\frac{y^{j+1}}{j+1}+y^j\right),
\end{split}
\end{equation}
where $\left(\frac{\partial}{\partial y} \right)^{k}\left(g(y)\right)$ denotes the $k$-th derivative of  $g$, the second and the third equality hold since the series of functions $\sum_{x=0}^\infty f_x$, with $f_x(t)= t\left(\frac{t^{x}}{x!}\right)$ for any $t\geq 0$ and $x\in \mathbb{N}$, uniformly converges to function $f(t)=te^t$ on non-negative closed intervals, and then the series of the derivatives converges to the derivative of the series (this last property is folklore and a proof is given, for instance, in \cite{rudin}).
Finally, by applying \eqref{lem_rho_lambda_eq6} to \eqref{lem_rho_lambda_eq1}, we get
\begin{equation*}
\begin{split}
&\sum_{x=0}^\infty \sum_{h\in [x]} h^d\frac{y^x}{x!e^y}=\sum_{j\in [d]}\frac{1}{(j+1)e^y}\stirling{d}{j}\sum_{x=0}^\infty (x+1)_{j+1}\frac{y^x}{x!}=\sum_{j\in [d]}\stirling{d}{j}\left(\frac{y^{j+1}}{j+1}+y^j\right),
\end{split}
\end{equation*}
that shows the claim.
\end{proof}

By Lemma \ref{lem_sum_power_eq} and Lemma \ref{lem_rho_lambda} we get
\begin{equation}\label{rho_lambda_main_eq}
\rho_{\tilde{\ell}_d}(y)=\frac{\sum_{x=0}^\infty \sum_{h\in [x]} h^d\frac{y^x}{x!e^y}}{\sum_{h\in [y]}h^d}=\frac{\sum_{x=0}^\infty \sum_{h\in [x]} h^d\frac{y^x}{x!e^y}}{\sum_{j\in [d]}\stirling{d}{j}\frac{(y+1)_{j+1}}{j+1}}=\frac{\sum_{j\in [d]}\stirling{d}{j}\left(\frac{y^{j+1}}{j+1}+y^j\right)}{\sum_{j\in [d]}\stirling{d}{j}\frac{(y+1)_{j+1}}{j+1}}=\Lambda_{d}(y),
\end{equation}
for any $y\in \mathbb{N}$. 
The following lemma shows that, independently of the value of $d$, the maximum over $y$ of $\Lambda_d(y)$ is given by $y=1$.
\begin{lemma}\label{lem_lambda_decreasing}
For any $y\in \mathbb{N}$, we have $\Lambda_d(1)\geq \Lambda_d(y)$. 
\end{lemma}
By putting the above results together, we get
\begin{equation}\label{thm_pol_claim_1}
\rho_{\{\tilde{\ell}_d\}}=\sup_{y\in \mathbb{N}}\rho_{\tilde{\ell}_d}(y)=\sup_{y\in \mathbb{N}}\Lambda_d(y)=\Lambda_d(1),
\end{equation}
where the second and the third equality hold by equality \eqref{rho_lambda_main_eq} and Lemma \ref{lem_lambda_decreasing}, respectively. 

We observe that \eqref{thm_pol_claim_1} shows the claim of the theorem, when restricting the class $\mathcal{L}_d$ to the monomial function $\ell_d(x)=x^d$ only. We will generalize \eqref{thm_pol_claim_1} to the whole class $\mathcal{L}_d$ of polynomial latency functions of maximum degree $d$. To do this, it is sufficient to show that 
\begin{equation}\label{thm_pol_claim_2}
\rho_{\tilde{\ell}^*_d}(y)\leq \Lambda_d(1)
\end{equation}
for any $\ell^*_d\in \mathcal{L}_d$ and $y\in \mathbb{N}$. Indeed, both \eqref{thm_pol_claim_1} and \eqref{thm_pol_claim_2} would imply that $$\Lambda_d(1)=\sup_{\ell_d^*\in \mathcal{L}_d}\rho_{\{{\tilde{\ell}}_d^*\}}=\sup_{{\ell}_d^*\in {\mathcal{L}}_d}\sup_{y\in \mathbb{N}}\rho_{{\tilde{\ell}}_d^*}(y)=\rho_{\tilde{{\mathcal{L}}}_d},$$ that is, the claim.
We give a further lemma.
\begin{lemma}\label{lem_lambda_q}
We have $\Lambda_d(1)<\Lambda_{d+1}(1)$.
\end{lemma}
\begin{proof}[Proof of Lemma \ref{lem_lambda_q}]
As $\stirling{d}{j}\leq j \stirling{d}{j}+\stirling{d}{j-1}=\stirling{d+1}{j}$ for any $j\in [d]$, we have
\begin{equation*}
\Lambda_d(1)=\sum_{j\in [d]}\left(\frac{j+2}{j+1}\right)\stirling{d}{j}\leq \sum_{j\in [d]}\left(\frac{j+2}{j+1}\right)\stirling{d+1}{j}< \sum_{j\in [d+1]}\left(\frac{j+2}{j+1}\right)\stirling{d+1}{j}=\Lambda_{d+1}(1),
\end{equation*}
and the claim of the lemma follows.
\end{proof}
Let us fix an arbitrary $y\in \mathbb{N}$ and a latency function ${\ell}_d^*\in {\mathcal{L}}_d$, that is, $\ell_d^*(x)=\sum_{q=0}^d\alpha_q x^q$, for some coefficients $\alpha_0,\ldots, \alpha_d\geq 0$. Let $\beta_q:=\alpha_q\sum_{h\in [y]}h^q$ for any $q\in [d]\cup\{0\}$. We have
\begin{equation*}
\begin{split}
&\Lambda_d(1)=\max_{q\in [d]\cup\{0\}}\Lambda_q(1)\geq \max_{q\in [d]\cup\{0\}}\Lambda_q(y)=\max_{q\in [d]\cup\{0\}}\rho_{\tilde{\ell}_q}(y)\geq \frac{\sum_{q=0}^d\beta_q\cdot \rho_{\tilde{\ell}_q}(y)}{\sum_{q=0}^d\beta_q}\\
&=\frac{\sum_{q=0}^d\alpha_q\left(\sum_{x=0}^\infty \sum_{h\in [x]} h^q\frac{y^x}{x!e^y}\right)}{\sum_{q=0}^d\left(\alpha_q\sum_{h\in [y]}h^q\right)}=\frac{\sum_{x=0}^\infty \sum_{h\in [x]} \left(\sum_{q=0}^d \alpha_qh^q\right)\frac{y^x}{x!e^y}}{\sum_{h\in [y]}\left(\sum_{q=0}^d\alpha_q h^q\right)}=\frac{\sum_{x=0}^\infty \left(\sum_{h\in [x]} \ell_d^*(h)\right)\frac{y^x}{x!e^y}}{\sum_{h\in [y]}\ell_d^*(h)}\\
&=\frac{\sum_{x=0}^\infty  x\tilde{\ell}_d^*(x)\frac{y^x}{x!e^y}}{y\tilde{\ell}_d^*(y)}=\rho_{\tilde{\ell}_d^*}(y),
\end{split}
\end{equation*}
where the first and the second equality, respectively, follow from Lemma \ref{lem_lambda_q} and  equality \eqref{rho_lambda_main_eq} (applied with $q$ in place of $d$). By the above inequalities, inequality \eqref{thm_pol_claim_2} follows. Finally, because of the above observations, both \eqref{thm_pol_claim_1} and \eqref{thm_pol_claim_2} show the claim. 
\end{proof}
\begin{remark}[A variant of Dobinski's formula]
We observe that Theorem \ref{thm_polynomial} and, in particular, the equality shown in Lemma \ref{lem_rho_lambda}, is of independent interest, as it provides a variant of Dobinski's formula \cite{Mansour15}. Indeed, Dobinski's formula states that the $d$-th Bell number $B_{d}=\sum_{j\in [d+1]}\stirling{d}{j}$ is equal to $\sum_{x=0}^\infty \frac{x^d}{x!e}$, while Lemma~\ref{lem_rho_lambda}, applied with $y=1$, states that $\sum_{j\in [d]}\stirling{d}{j}\left(\frac{j+2}{j+1}\right)=\sum_{x=0}^\infty \frac{\sum_{h\in [x]} h^d}{x!e}.$
\end{remark}

\section{\minpot\ with non-increasing latencies 
} \label{sec6}

This section is devoted to \minpot\ for congestion games having {\em non-increasing} latency functions. We shall see that the situation significantly differs from the non-decreasing case. In particular, Proposition \ref{prop:sym:size1} does not hold for non-increasing   latency functions (cf. Example \ref{ex_non_incr} in the Appendix). See Section \ref{sec_A2} in the Appendix for the missing proofs of this section.

\begin{proposition} \label{prop1} If the game is symmetric and $\S$ denotes the strategy space of every player, then 
\minpot\ can be solved  in  $|\S|$ steps. \end{proposition}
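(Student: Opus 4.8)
The plan is to reduce \minpot\ for symmetric games with non-increasing latencies to minimizing a concave function over a simplex, and then to observe that such a minimum is attained at a vertex, which corresponds to a state where all players play one common strategy; the algorithm will simply try the $|\S|$ candidate strategies. First I would note that, for every resource $r$, the map $F_r(k):=\sum_{j=1}^{k}\ell_r(j)$ — the contribution of $r$ to $\Phi_{\cal G}$ when $\congestion_r=k$ — satisfies $F_r(0)=0$ and has non-increasing successive differences $F_r(k)-F_r(k-1)=\ell_r(k)$, precisely because $\ell_r$ is non-increasing; hence the piecewise-linear interpolation of $F_r$ on $[0,n]$ is a concave function.

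Next I would use symmetry to reformulate the problem in terms of profiles. Since every player has the same strategy set $\S$, the value $\Phi_{\cal G}(\ss)$ depends only on the profile $x=(x_t)_{t\in\S}$, where $x_t$ is the number of players choosing $t$, and conversely every nonnegative integer vector $x$ with $\sum_{t\in\S}x_t=n$ is realized by some state. Writing $\congestion_r=\sum_{t\in\S:\,r\in t}x_t$, which is a linear function of $x$ taking values in $[0,n]$ on the polytope $P:=\{x\in\mathbb{R}^{|\S|}_{\geq 0}:\sum_{t\in\S}x_t=n\}$, we get $\Phi_{\cal G}=\sum_{r\in\R}F_r\!\left(\sum_{t\ni r}x_t\right)$, which therefore extends to a concave function $\widehat\Phi$ on $P$, being a sum of concave functions composed with linear maps. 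A concave function on a bounded polytope attains its minimum at a vertex (expressing $x=\sum_i\lambda_i v_i$ as a convex combination of the vertices $v_i$ of $P$, concavity gives $\widehat\Phi(x)\geq\sum_i\lambda_i\widehat\Phi(v_i)\geq\min_i\widehat\Phi(v_i)$). The vertices of $P$ are exactly the integral points $n\cdot\mathbf{e}_t$ (the $t$-th scaled unit vector), $t\in\S$, i.e., the profiles in which all players play $t$. Hence $\min_{\ss}\Phi_{\cal G}(\ss)=\min_{t\in\S}\sum_{r\in t}\sum_{j=1}^{n}\ell_r(j)$.

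The algorithm then iterates over $t\in\S$: after an $\mathcal{O}(nm)$ precomputation of $\sum_{j=1}^{n}\ell_r(j)$ for every $r\in\R$, it evaluates $\sum_{r\in t}\sum_{j=1}^{n}\ell_r(j)$ in $\mathcal{O}({\tt size}({\cal G}))$ time per strategy and outputs the state in which all players play a minimizing $t$; this is $|\S|$ steps (and $m^{\mathcal{O}(1)}$ whenever ${\tt size}({\cal G})=\mathcal{O}(1)$, matching Table~\ref{tab3bis}). The step I expect to be the crux is recognizing that symmetry is exactly what makes the feasible profile set a simplex, so that its vertices are the ``all players on one strategy'' profiles and, in particular, are integral; without symmetry the feasible profiles form a transportation-type polytope whose vertices are neither integral nor of this simple shape, which is consistent with the $\NP$ result of Theorem~\ref{thm2}.
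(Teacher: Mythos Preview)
Your proof is correct and reaches the same conclusion as the paper --- that some potential-minimizing state has all players on a single common strategy --- but by a genuinely different route. The paper argues \emph{locally}: given any state in which two distinct strategies $a$ and $b$ are both played, it compares the costs of a player on $a$ and a player on $b$ and, using that latencies are non-increasing, shows that shifting the whole $a$-group onto $b$ (or vice versa) is a sequence of non-worsening moves for the deviating players; the exact-potential property then guarantees the potential does not increase. Iterating this merging step yields a single-strategy optimum. You argue \emph{globally}: you observe that $F_r(k)=\sum_{j\le k}\ell_r(j)$ is concave (non-increasing increments), that symmetry turns the state space into the integer points of the simplex $\{x\ge 0:\sum_t x_t=n\}$, and that $\Phi$ extends to a concave function on this simplex, hence attains its minimum at a vertex $n\cdot\mathbf{e}_t$. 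Your approach is cleaner and more structural, and it makes the role of symmetry (simplex with integral vertices) and of non-increasing latencies (concavity of $F_r$) transparent and separable; the paper's approach is more elementary, avoids any real-valued relaxation, and highlights the game-theoretic mechanism (improving/non-worsening moves and the exact potential) rather than convex geometry.
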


Proposition \ref{prop1}  implies that \minpot\ can be solved in $m^{\mathcal{O}(1)}$ operations when every strategy consists of selecting a constant number of resources. 
By a reduction from \textsc{vertex cover}, the following result states that \minpot\ is hard when the symmetry property is dropped,  even if other parameters of the game are significantly restricted.  

\begin{theorem} \label{thm2} \minpot\ is {NP}-hard, even if all the resources have the same latency function, and all the players only have two singleton strategies. 
\end{theorem}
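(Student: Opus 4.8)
The plan is to reduce from \textsc{Vertex Cover}, which is NP-complete~\cite{GJ79}: given a simple graph $G=(V,E)$ and an integer $k$, decide whether $G$ has a vertex cover of size at most $k$. From such an instance I would build a singleton congestion game ${\cal G}$ with resource set $\R:=V$ and one player $p_e$ for every edge $e=\{u,v\}\in E$, whose two strategies are the singletons $\{u\}$ and $\{v\}$; every resource carries the \emph{same} latency function $\latency$ defined by $\latency(0)=0$, $\latency(1)=2$, and $\latency(h)=1$ for all $h\geq 2$. This $\latency$ satisfies $\latency(0)=0$, $\latency(1)>0$ and is non-increasing (the monotonicity condition concerns arguments $\geq 1$), so ${\cal G}$ belongs to the restricted class named in the statement, and the reduction is clearly polynomial. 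The border case $E=\emptyset$ is trivial and handled separately, and simplicity of $G$ guarantees that the two strategies of each player are distinct singletons.

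The heart of the argument is the identity
\begin{equation*}
\Phi_{{\cal G}}(\ss)\;=\;|E|\;+\;t(\ss),\qquad\text{where }\ t(\ss):=\bigl|\{r\in\R:\congestion_r(\ss)\geq 1\}\bigr|.
\end{equation*}
Indeed $\sum_{j=0}^{h}\latency(j)=h+1$ for every $h\geq 1$ and equals $0$ for $h=0$, while $\sum_{r\in\R}\congestion_r(\ss)=|E|$ since each player occupies exactly one resource; summing $\sum_{j=0}^{\congestion_r(\ss)}\latency(j)$ over all $r\in\R$ yields $|E|+t(\ss)$. Hence minimizing Rosenthal's potential in ${\cal G}$ is the same as minimizing the number of \emph{distinct} resources in use.

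I would then show that $(G,k)$ is a \textsc{yes}-instance of \textsc{Vertex Cover} if and only if ${\cal G}$ admits a state $\ss$ with $\Phi_{{\cal G}}(\ss)\leq |E|+k$, i.e.\ with $t(\ss)\leq k$. For the forward direction, a vertex cover $S$ with $|S|\leq k$ gives such a state by routing each player $p_e$ to an endpoint of $e$ lying in $S$; then every used resource belongs to $S$, so $t(\ss)\leq|S|\leq k$. For the converse, given $\ss$ with $t(\ss)\leq k$, put $S:=\{r\in\R:\congestion_r(\ss)\geq 1\}$; then $|S|=t(\ss)\leq k$ and $S$ is a vertex cover, because for each edge $e=\{u,v\}$ the player $p_e$ sits on $u$ or on $v$, which forces that endpoint into $S$.

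Every step above is an elementary computation, so I do not expect a genuine obstacle; the only point requiring care is the choice of latency. The crucial feature is that a \emph{strictly decreasing} (here, step-shaped) latency makes the potential \emph{reward} concentration of players, turning $\Phi_{{\cal G}}$ into an affine function of the number of occupied resources — precisely the vertex-cover objective. Keeping $\latency(h)\geq 1>0$ for all $h\geq 1$ (rather than, say, $\latency(h)=0$ for $h\geq 2$) also keeps the constructed instance within the model's assumptions, under which every player incurs positive cost.
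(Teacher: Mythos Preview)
Your proof is correct and follows essentially the same route as the paper: both reduce from \textsc{Vertex Cover}, identify vertices with resources and edges with players whose two singleton strategies are the endpoints, and observe that the potential is an affine function of the number of occupied resources. The paper uses the latency $\latency(1)=1$, $\latency(h)=0$ for $h\geq 2$, so that $\Phi_{\cal G}(\ss)$ equals the number of used resources directly; you shift everything by~$1$ to get $\latency(1)=2$, $\latency(h)=1$ for $h\geq 2$, obtaining $\Phi_{\cal G}(\ss)=|E|+t(\ss)$ instead. The extra additive constant is immaterial for the reduction, and your variant has the small advantage you point out: it keeps every player's cost strictly positive, in line with the remark in the preliminaries, whereas the paper's choice lets the cost drop to~$0$ at congestion~$\geq 2$.
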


To conclude, observe that the approximability of \minpot\ when ${\tt size} ({\cal G})= 1$ is similar to the approximability of 
\textsc{set cover} 
(every player is ``covered'' by her selected resource).  In the following theorem, $H_k$ stands for the $k$-th harmonic number $1+1/2+1/3+\cdots +1/k=\Theta(\ln k)$.  

\begin{theorem} \label{thmHn} \minpot\ admits a $H_{n}$-approximation algorithm for singleton 
congestion games. Moreover, the  approximation ratio $H_{n}$ 
is  best possible unless P $=$ NP. 
\end{theorem}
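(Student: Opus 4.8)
The plan is to exploit the close connection with \textsc{Set Cover} hinted at in the paragraph preceding the statement. Since ${\tt size}({\cal G})=1$, a state simply assigns each player $i$ to some $\ss_i\in\S_i$; writing $P_r:=\{i\in\N:r\in\S_i\}$, a state is nothing but a choice of a family of ``used'' resources (those with $n_r(\ss)\ge 1$) that together cover all players, together with a distribution of the players over that family. Because every $\latency_r$ is non-increasing, piling several players on a resource is never more expensive than the sum of the individual first-visit costs $\latency_r(1)$, so minimizing $\Phi_{\cal G}$ really is a covering problem. I would prove the upper bound by a greedy algorithm in the spirit of Chv\'atal's analysis of greedy set cover, and the matching lower bound by a reduction from \textsc{Set Cover} with its logarithmic inapproximability.

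For the $H_n$-approximation, the algorithm I would analyze maintains a set $U$ of still-unassigned players (initially $\N$) and the current loads $c_r$ (initially $0$); while $U\neq\emptyset$, for every resource $r$ let $A_r:=U\cap P_r$ and define its \emph{price} as $\frac{1}{|A_r|}\sum_{j=c_r+1}^{c_r+|A_r|}\latency_r(j)$ when $A_r\neq\emptyset$, pick a resource of minimum price, assign all of $A_r$ to it, update $c_r$, and remove $A_r$ from $U$. Two observations, both coming from non-increasing latencies, make this work. First, for a fixed resource the quantity $\frac1a\sum_{j=c_r+1}^{c_r+a}\latency_r(j)$ is non-increasing in $a$ (each of the $a$ summands is at least $\latency_r(c_r+a+1)$), so it is enough to consider the single candidate $A_r=U\cap P_r$ per resource; this keeps the loop polynomial (a resource is picked at most once, hence at most $\min(n,m)$ iterations). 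Second, fix an iteration with $|U|=u$, and look at how an optimal state $\ss^\star$ sends those $u$ players to resources: assigning a block of $b$ of them onto a resource whose current load is $c_r$ costs $\sum_{j=c_r+1}^{c_r+b}\latency_r(j)\le\sum_{j=1}^{b}\latency_r(j)\le\sum_{j=1}^{n_r(\ss^\star)}\latency_r(j)$, so these blocks cover $U$ at total cost at most $\Phi_{\cal G}(\ss^\star)$; averaging, some resource has price at most $\Phi_{\cal G}(\ss^\star)/u$, hence so does the one greedy picks. Charging each iteration's price to the players it covers, the $k$-th player to be assigned is charged at most $\Phi_{\cal G}(\ss^\star)/(n-k+1)$, and summing over $k$ gives $\Phi_{\cal G}(\text{output})\le H_n\cdot\Phi_{\cal G}(\ss^\star)$.

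For optimality of the ratio I would reduce from \textsc{Set Cover}, which is NP-hard to approximate within $(1-\varepsilon)\ln|\mathcal U|$ for every fixed $\varepsilon>0$. From a ground set $\mathcal U$ and a family $\mathcal C\subseteq 2^{\mathcal U}$, build a singleton game with one player per element of $\mathcal U$ and one resource $r_S$ per set $S\in\mathcal C$, let player $u$ have strategy set $\{r_S:u\in S\}$, and equip every resource with the same non-increasing latency $\latency(1)=1$, $\latency(h)=0$ for $h\ge 2$ (a legitimate latency, since $\latency(0)=0$ and $\latency(1)>0$). Then for every state $\ss$ we get $\Phi_{\cal G}(\ss)=\sum_r\sum_{j=1}^{n_r(\ss)}\latency_r(j)=|\{r:n_r(\ss)\ge 1\}|$, i.e.\ exactly the number of used resources, and the used resources of any state form a set cover of $\mathcal U$ while, conversely, any set cover yields a state using that many resources. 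Hence the minimum potential equals the minimum cover size, with $n=|\mathcal U|$; since $H_n=\ln n+\Theta(1)$, a polynomial $(1-\varepsilon)H_n$-approximation for \minpot\ would give a $(1-\varepsilon/2)\ln n$-approximation for \textsc{Set Cover} on all large enough instances, which is impossible unless P $=$ NP.

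I expect the main obstacle to be the second observation in the greedy analysis: the charging argument needs to compare \emph{greedy's marginal cost on partially loaded resources} with \emph{the optimum's cost}, and this comparison is valid only because the latencies are non-increasing, so that ``joining an existing pile'' is never penalized. Without monotonicity the price of a resource could increase as it fills and the clean $H_n$ bound would collapse -- which is exactly why the non-decreasing case (Section~\ref{sec4}) behaves so differently. The remaining ingredients -- polynomial running time, the precise bookkeeping of the charging, and translating \textsc{Set Cover} hardness into the stated form -- are routine.
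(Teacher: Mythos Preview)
Your proposal is correct and follows essentially the same approach as the paper: a greedy set-cover-style algorithm with the classical charging argument for the $H_n$ upper bound, and a reduction from \textsc{Set Cover} (latencies $\latency(1)>0$, $\latency(h)=0$ for $h\ge 2$) together with the Dinur--Steurer $(1-\varepsilon)\ln n$ inapproximability for the lower bound. The only cosmetic differences are that the paper explicitly removes a picked resource from the pool (your load-tracking $c_r$ is harmless but unnecessary since, as you note, each resource is picked at most once), reduces from \emph{weighted} \textsc{Set Cover}, and also exhibits a concrete instance on which greedy attains exactly $H_n$.
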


\section{Conclusion}

We have considered the complexity of building a state of minimum potential in congestion games with monotone latency functions. Our results show that the symmetry of the players' strategies, together with the maximum number of resources used simultaneously, plays an important role. 

Although it is long known that, in general, computing a pure Nash equilibrium in a congestion game is PLS-complete \cite{FPT04}, an intriguing question for future work is about the complexity of computing a pure Nash equilibrium (i.e., a {\em local} minimum of Rosenthal's potential instead of a global minimum) in a monotone non-decreasing congestion game with $\texttt{size}=2$ (general strategies) or $\texttt{size}=3$ (symmetric strategies).  The same question is of interest in monotone non-increasing congestion game with $\texttt{size}=2$ (general strategies).   

Natural dynamics like better or best response, starting from any initial state, always converge towards a pure Nash equilibrium in congestion games, and the time convergence is known to be polynomial if the instance is singleton \cite{DBLP:conf/aaai/IeongMNSS05}, or the strategies are bases of a matroid \cite{ARV08}. An interesting question is to bound the worst-case convergence time of these dynamics for (possibly monotone) congestion games with $\texttt{size}=\mathcal{O}(1)$.       

We proposed an approximation algorithm for congestion games with non-increasing latency functions and $\texttt{size}=1$ (Theorem \ref{thmHn}) but it would be interesting to have an approximation for bigger sizes.


\newpage
\bibliographystyle{plain}
\bibliography{references}

\appendix
\section{Appendix}
\subsection{Proposition \ref{prop:sym:size1} does not extend}\label{sec_A1}

The following example shows that Proposition \ref{prop:sym:size1} no longer holds if the symmetry assumption is dropped.  

\begin{example}
    Suppose we have two players $a$ and $b$ and a set of three resources $\R = \{r_1, r_2, r_3\}$. The latency function associated with 
    resource $r_1$ is $\latency_{r_1}(h) = h$, with resource $r_2$ is $\latency_{r_2}(h) = 2h$, and with resource $r_3$ is $\latency_{r_3}(h) = h$. Player $a$ can choose between resources $r_2$ and $r_3$, while player $b$ can choose between resources $r_1$ and $r_3$.

    Consider state $\ee_1$, where player $a$ uses $r_2$ and player $b$ uses $r_3$. For $a$, we have that $\latency_{r_2}(1) = 2$, while, for $b$, we have that $\latency_{r_3}(1) = 1$. State $\ee_1$ is an equilibrium because if player $a$ moves to $r_3$, then $\latency_{r_3}(2) = 2$, which is the same as she had before. Also, if player $b$ moves to $r_1$, then $\latency_{r_1}(1) = 1$ and $b$ does not improve. Observe that $\Phi_{\cal G}(\ee_1) = 1 + 2 = 3$.
    
    Consider another state $\ee_2$, where player $a$ uses $r_3$ and player $b$ uses $r_1$. Then, for $a$, we have that $\latency_{r_3}(1) = 1$ and, for $b$, we have that $\latency_{r_1}(1) = 1$. State $\ee_2$ is also an equilibrium because if player $a$ moves to $r_2$, then $\latency_{r_2}(2) = 2$, which is more than what she had before. Also, if player $b$ moves to $r_3$, then she increases her latency too. We have $\Phi_{\cal G}(\ee_2) = 1 + 1 = 2\neq \Phi_{\cal G}(\ee_1)$.  \qed
    

\end{example}


Similarly, by the following example, Proposition \ref{prop:sym:size1} no longer holds, even for symmetric games, if ${\tt size}({\cal G}) = 2$.


\begin{example} \label{counterexample1}
Suppose we have two players and a set of four resources $\R = \{ r_1, r_2, r_3, r_4 \}$. The latency for each resource is $\latency_{r_1}(h) = h$, $\latency_{r_2}(h) = 4h$, $\latency_{r_3}(h) = 2h$, $\latency_{r_4}(h) = 2h$. The common strategy set consists of every subset of $\R$ of cardinality 2. 

Let $\ee_1$ be the state where player $1$ chooses resources $\{ r_1, r_2 \}$ and player $2$ chooses resources $\{ r_3, r_4 \}$. State $\ee_1$ is a Nash equilibrium. In fact, the cheapest resource available for deviation to player $1$ (either $r_3$ or $r_4$) costs $4$, while the costlier resource used by this player ($r_2$) also costs $4$: this implies that player $1$ has no improving deviations. Similarly, the cheapest resource available for deviation to player $2$ ($r_1$) costs $2$, while the costlier resource used by this player (either $r_3$ or $r_4$) also costs $2$, so that player $2$ has no improving deviations. So, $\ee_1$ is an equilibrium such that $\Phi_{\cal G}(\ee_1) = \sum_{i\in [4]}\latency_{r_i}(1) = 9$. However, there is another Nash equilibrium $\ee_2$, in which player $1$ chooses $\{ r_1, r_3 \}$ and player $2$ chooses $\{ r_1, r_4 \}$. In fact, the cheapest resource available for deviation to player $1$ (either $r_2$ or $r_4$) costs $4$, while the costlier resource used by this player (either $r_1$ or $r_3$) costs $2$, so that player $1$ has no improving deviations. Similarly, the cheapest resource available for deviation to player $2$ (either $r_2$ or $r_3$) costs $4$, while the costlier resource used by this player (either $r_1$ or $r_4$) costs $2$, so that player $2$ has no improving deviations. In this case, we have $\Phi_{\cal G}(\ee_2) = \latency_{r_1}(1)+\latency_{r_1}(2)+ \latency_{r_3}(1)+\latency_{r_4}(1) = 7\neq \Phi_{\cal G}(\ee_1)$. \qed
\end{example}


Notice that the strategies of the instance described in Example  \ref{counterexample1} form a uniform matroid. This means that Proposition \ref{prop:sym:size1} cannot be extended from singleton congestion games to matroid congestion games.

\subsection{Missing proofs of Section \ref{sec4}} \label{sec_A3}

\subsubsection{Proof of Claim \ref{claim:symm:3}}\label{Sec:A2.1}
\begin{proof}
Start with an empty matching $M$. For every player $i$, if $\ss_i = \{ r_j, r_k \}$ then add edge $(v_j^a, v_k^b)$ to $M$, where $a$ and $b$ are the smallest available indices of unmatched vertices in $M$. Every edge $(v_j^a, v_k^b) \in M$ such that $j$ and $k$ are both different from $0$ has weight $C - \latency_{r_j}(a) - \latency_{r_k}(b)$. Every other edge $(v_j^a, v_0^b) \in M$ such that $j \neq 0$ has weight $C - \latency_{r_j}(a)$. Therefore, the total weight of $M$ is $n \cdot C - \sum_{r \in \R} \sum_{l=0}^{\congestion_r(\ss)}\latency_r(l) = n \cdot C - \Phi_{\cal G}(\ss)$.
\end{proof}
    
\subsubsection{Proof of Claim \ref{claim:symm:2}}\label{Sec:A2.2}
\begin{proof}
Start with a matching $M$. A \emph{hole} in $M$ is the following configuration. For some resource index $j$, and some index $a<n$, vertex $v_j^a$ is unmatched whereas $v_j^b$ such that $b>a$ is matched. 
Each time there is a hole, $M$ can be modified in such a way that its total weight does not decrease. Indeed, it suffices to replace the edge $(u,v_j^b) \in M$ with $(u,v_j^a)$. Since the latency functions are non-decreasing, the weight of $(u,v_j^a)$ is not smaller than the weight of $(u,v_j^b)$. 

After repeating this process until there are no holes anymore, we get a new matching $M'$ where we know that $w(M') \geq w(M)$. 

Use $M'$ to construct a state $\ss$ of the game where each edge in $M'$ corresponds to the strategy of some player. More precisely, an edge  $(v_j^a,v_k^b) \in M'$ where $j,k \neq 0$ corresponds to the strategy $\{r_j,r_k\}$. In addition, an edge  $(v_0^a,v_j^b) \in M'$ where $j \neq 0$ corresponds to the strategy $\{r_0,r_j\}$ which is $\{r_j\}$ in the original game without $r_0$.

Therefore, we have created a state whose potential $\Phi_{\cal G}(\ss)$ is, by construction, equal to $n \cdot C - w(M')$. Since $w(M) \leq w(M')$, we get $\Phi_{\cal G}(\ss) \leq n \cdot C - w(M)$.  \end{proof}

\subsubsection{Proof of Theorem \ref{thm:asym:size2}}

\begin{proof} We reduce from the 3DM problem, which is known to be NP-complete \cite{GJ79}. The input is a set $M \subseteq X \times Y \times Z$, where $X, Y, Z$ are disjoint sets having the same number $q$ of elements, and the question is to decide whether $M$ contains a matching, i.e., a subset $M' \subseteq M$ such that $|M'| = q$ and no two elements of $M'$ agree in any coordinate.

Take an instance $I$ of 3DM and construct a congestion game ${\cal G}$ as follows. 

There are $q$ players, corresponding to the elements of $X$. The resource set $\R$ is equal to $Y \cup Z$, and $\latency_r(h)=h$ for all $r\in \R$. For every triplet $(x,y,z) \in M$, player $x$ has $\{y,z\}$ in her strategy space $\S_x$.  Thus, every strategy consists of two resources. 

We claim that $I$ is a \textsc{yes} instance of 3DM if and only if ${\cal G}$ admits a state $\ss$ such that $\Phi_{\cal G}(\ss) = 2q$.

($\Rightarrow$) Let $M'$ be a matching of $I$ and build a state $\ss$ of ${\cal G}$ as follows. For each triplet $(x,y,z) \in M'$, player $x$ plays $\{y,z\}$. Since $M'$ is a matching, every resource is played by a single player, so each of the $|Y\cup Z|=2q$ resources contributes by one unit to the potential of $\ss$, meaning that $\Phi_{\cal G}(\ss) = 2q$.

($\Leftarrow$) Let $\ss$ be a state of ${\cal G}$ such that  $\Phi_{\cal G}(\ss) =  \sum_{r\in \R}\sum_{j=0}^{\congestion_r(\ss)}\latency_r(j)= 2q$. For a resource $r \in Y \cup Z$ used by at least one player in $\ss$, we have that 
$\sum_{j=0}^{\congestion_r(\ss)}\latency_r(j)=1+2+\cdots+\congestion_r(\ss)$,  so $\left( \sum_{j=0}^{\congestion_r(\ss)}\latency_r(j) \right) / \congestion_r(\ss)=1$ when $\congestion_r(\ss)=1$, and $\left( \sum_{j=0}^{\congestion_r(\ss)}\latency_r(j) \right) / \congestion_r(\ss)>1$ when $\congestion_r(\ss)>1$. 
Suppose every used resource $r$ ``charges'' $\left( \sum_{j=0}^{\congestion_r(\ss)}\latency_r(j) \right) / \congestion_r(\ss)$ to each of its $\congestion_r(\ss)$ users. The total sum of the charges is equal to $\Phi_{\cal G}(\ss)$. Every player $x$ plays exactly two resources, so every player is charged $2$ when she is the only user of her resources, otherwise she is charged a quantity strictly larger than $2$. Since there are exactly $|X|=q$ players, the total sum of the charges is $2q$, i.e., $\Phi_{\cal G}(\ss)=2q$, if and only if every player is the only user of her resources under $\ss$. Therefore, the set $M':=\{(x,\ss_x \cap Y,\ss_x \cap Z) : x \in X\}$ constitutes a matching of $I$ when $\Phi_{\cal G}(\ss)=2q$. \end{proof}

\subsubsection{Proof of Theorem \ref{thm:sym:size3}}
\begin{proof} We reduce from the X3C problem, which is known to be NP-complete \cite{GJ79}. The input is a finite set $X$, with $|X| = 3q$ and a collection $C$ of 3-element subsets of $X$, and the question is to decide whether $C$ contains an exact cover for $X$, that is, a sub-collection $C' \subseteq C$ such that every element of $X$ occurs in exactly one member of $C'$.

Take an instance $I$ of X3C and construct a congestion game ${\cal G}$ as follows. There are $q$ players. Each $r \in X$ is associated with a resource $r$ with latency function $\latency_{r}(h)=h$. All the players have the same strategy space $C$. Thus, ${\cal G}$ is a symmetric game where every strategy consists of three resources. 

We claim that $I$ is a \textsc{ yes} instance of X3C if and only if ${\cal G}$ admits a state $\ss$ such that $\Phi_{\cal G}(\ss) = 3q$.

($\Rightarrow$) Let $C'$ be a sub-collection of $C$ covering every element of $X$ exactly once. Thus, $|C'|=q$. In ${\cal G}$, each player $i$ is associated with a distinct 3-set $T_i$ of $C'$, i.e., $\ss_i=T_i$. Therefore, every resource is played by a single player, so each of the $3q$ resources contributes by one unit to the potential, meaning that $\Phi_{\cal G}(\ss) = 3q$.

($\Leftarrow$) Let $\ss$ be a state of ${\cal G}$ such that  $\Phi_{\cal G}(\ss) =  \sum_{r\in X}\sum_{j=0}^{\congestion_r(\ss)}\latency_r(j)= 3q$. For a resource $r \in X$ used by at least one player in $\ss$, we have that 
$\sum_{j=0}^{\congestion_r(\ss)}\latency_r(j)=1+2+\cdots+\congestion_r(\ss)$,  so $\left( \sum_{j=0}^{\congestion_r(\ss)}\latency_r(j) \right) / \congestion_r(\ss)=1$ when $\congestion_r(\ss)=1$, and $\left( \sum_{j=0}^{\congestion_r(\ss)}\latency_r(j) \right) / \congestion_r(\ss)>1$ when $\congestion_r(\ss)>1$. Suppose every used resource $r$ ``charges'' $\left( \sum_{j=0}^{\congestion_r(\ss)}\latency_r(j) \right) / \congestion_r(\ss)$ to each of its $\congestion_r(\ss)$ users. The total sum of the charges is equal to $\Phi_{\cal G}(\ss)$. Every player $i$ plays exactly three resources, so every player is charged $3$ when she is the only user of her resources, otherwise she is charged a quantity strictly larger than $3$. Since there are exactly $q$ players, the total sum of the charges is $3q$, i.e., $\Phi_{\cal G}(\ss)=3q$, if and only if every player is the only user of her resources under $\ss$. Therefore, when $\Phi_{\cal G}(\ss)=3q$, the set $C':=\{\ss_i : i \in \N\}$ constitutes an exact cover of $X$.
\end{proof}

\subsection{Missing proofs of Section \ref{sec:approx}} \label{sec_A4}

\subsubsection{Proof of Proposition \ref{prop:social_cost}}

\begin{proof}
As for claim (i), we first recall that the latency functions of $\mathcal{G}$ are non-negative and non-decreasing by hypothesis. Thus, by exploiting the definition of each $\tilde{\ell}_r$, we have that the latency functions of $\tilde{\mathcal{G}}$ are non-negative and non-decreasing, too. Furthermore, because of the non-decreasing monotonicity of each function $\ell_r$, we also have that each function $\tilde{g}_r$ defined as  $\tilde{g}_r(x):=x\cdot \tilde{\ell}_r(x)=\sum_{h\in [x]}\ell_r(x)$ is convex (where the last equality holds by definition of $\tilde{\ell}_r$), that is, each $\tilde{\ell}_r$ is semi-convex. Thus, claim (i) holds. 

As for claim (ii), let $\ss$ be an arbitrary state (of both $\mathcal{G}$ and $\tilde{\mathcal{G}}$). We have
\begin{align*}
&\Phi_{\mathcal{G}}(\ss)=\sum_{r\in R}\sum_{h=1}^{n_r(\ss)}\ell_r(h)=\sum_{r\in R}n_r(\ss)\cdot \tilde{\ell}_r(n_r(\ss))\\
&\quad =\sum_{r\in R}\sum_{i\in \N:r\in s_i}\tilde{\ell}_r(n_r(\ss))=\sum_{i\in N}\sum_{r\in s_i}\tilde{\ell}_r(n_r(\ss))=\sum_{i\in \N}\tilde{c}_i(\ss)=\sc_{\tilde{\mathcal{G}}}(\ss),
\end{align*}
and, by the arbitrariness of $\ss$, this shows claim (ii). 
\end{proof}

\subsubsection{Why might previous approaches for \mincost\ fail for \minpot?}\label{subsec:fail}
Paccagnan and Gairing, by exploiting Dobinski's formula \cite{Mansour15}, showed that their bound on the approximation guarantee for the \mincost\ problem applied to congestion games with polynomial latency functions of maximum degree $d$ (and non-negative coefficients) simplifies to the $(d+1)$-th Bell number $B_{d+1}$. Considering that, by Bernoulli's formula \cite{Knuth89}, $\tilde{\mathcal{L}}_d$ is a collection of polynomials of maximum degree $d$, it might seem that we could reuse the same reasoning to directly obtain the same bound. However, the coefficients of the polyomials appearing in $\tilde{\mathcal{L}}_d$ are not arbitrary and can be negative. Thus, when dealing with the \minpot\ problem, Dobinski's formula cannot be directly applied as in \cite{PG21} to obtain more explicit forms for the approximation guarantee.

\subsubsection{Proof of Lemma \ref{lem_lambda_decreasing}}
\begin{proof}
We will first show the claim for $d\in \{1,2,3\}$, and then we will move to the general case $y\geq 4$. If $d=1$ and $y\geq 1$ is arbitrary, recalling that $\stirling{1}{1}=1$, we have
\begin{equation*}
\Lambda_{d}(y)=\frac{y^2/2+y}{y^2/2+y/2}=\frac{y/2+1}{y/2+1/2}.
\end{equation*}
As the last term of the above equalities is decreasing in $y\geq 1$, we immediately derive $\Lambda_{d}(y)\leq\Lambda_{d}(1)$
and this shows the claim for $d=1$. 

If $d=2$, recalling that $\stirling{2}{1}=\stirling{2}{2}=1$, for each $y\geq 1$, we have
\begin{equation*}
\Lambda_{d}(y)=\frac{2y^2+9y+6}{(y+1)(2y+1)}.
\end{equation*}
By taking the derivative of $\Lambda_{d}(y)$ w.r.t. to $y\geq 1$, we have
\begin{equation*}
\frac{\partial}{\partial y}\Lambda_{d}(y)=-\frac{12 y^2 + 20 y + 9}{(y + 1)^2 (2 y + 1)^2}.
\end{equation*}
As $12 y^2 + 20 y + 9\geq 0$ holds for any $y\geq 1$, we have that the above derivative is non-positive for any $y\geq 1$, that is, $\Lambda_{d}(y)$ is non-incrreasing in $y\geq 1$. Thus, we conclude that, $\Lambda_{d}(y)\leq \Lambda_{d}(1)$ holds for any integer $y\geq 1$, if $d=2$. 

If $d=3$, recalling that $\stirling{3}{1}=\stirling{3}{3}=1$ and $\stirling{3}{2}=3$, for each $y\geq 1$, we have
\begin{equation*}
\Lambda_{d}(y)=\frac{y^3+8y^2+14y+4}{(y+1)(y^2-y+2)}.
\end{equation*}
By taking the derivative of $\Lambda_{d}(y)$ w.r.t. to $y\geq 1$, we have
\begin{equation*}
\frac{\partial}{\partial y}\Lambda_{d}(y)=-\frac{2(4y^4+13y^3-y^2 -16 y - 12)}{(y + 1)^2 (y^2- y + 2)^2}.
\end{equation*}
As $4y^4+13y^3-y^2 -16 y - 12\geq 0$ holds for any $y\geq 2$, we have that the above derivative is non-positive for any $y\geq 2$, that is, $\Lambda_{d}(y)$ is non-increasing in $y\geq 2$. Thus, since $\Lambda_{d}(2)=6<\frac{27}{4}=\Lambda_{d}(1)$, we conclude that $\Lambda_{d}(y)\leq \Lambda_{d}(1)$ holds for any integer $y\geq 1$, if $d=3$.

In the remainder of the proof, we will consider the general case of $d\geq 4$. Let $N(y)$ and $D(y)$ denote, respectively, the numerator and the denominator of the right-hand side of \eqref{def_lambda_y}. We show the following facts.
\begin{fact}\label{fact1}
$D(y)\geq y^{d+1}\left(\frac{1}{d+1}+\frac{1}{2y}\right)$ for any integer $y\geq 2$. 
\end{fact}
\begin{proof}[Proof of Fact \ref{fact1}]
We will first show that
\begin{equation}\label{fact1_eq1}
h^d\geq \frac{h^{d+1}-(h-1)^{d+1}}{d+1}+\frac{h^d-(h-1)^d}{2},
\end{equation}
holds for any $h\in \mathbb{N}$, by using a geometric argument. Given a measurable set $S\subseteq \mathbb{R}^2$, let $\mu(S)$ denotes its area (in the sense of Lebesgue measure). Let $A:=\{(x,y):x\in [h-1,h],y\in [0,h^d]\}$, $B:=\{(x,y):x\in [h-1,h],y\in [0,x^d]\}$ and $C:=\{(x,y):x\in [h-1,h],y\in [(x-(h-1))h^d+(h-1)^{d},h^d]\}$. We observe that $A$ is a rectangle with height of $h^d$ and base coinciding with the segment on the $x$-axis from $h-1$ to $h$, $B$ is the non-negative subgraph of the function $x^d$ restricted to $x\in [h-1,h]$ and $C$ is the right triangle generated by vertices $v_1=(h-1,(h-1)^d),v_2=(h-1,h^d),v_3=(h,h^d)$. 

We trivially have that $C\subseteq A$; furthermore, since $x^d$ is non-decreasing, we have that $B\subseteq A$. Thus, we necessarily have that $\mu(A)\geq \mu(B\cup C)=\mu(B)+\mu(C)-\mu(B\cap C)$. Now, as $x^d$ is convex, we necessarily have that $\mu(B\cap C)=0$, and then, by continuing from the above inequality, we obtain $\mu(A)\geq \mu(B)+\mu(C)$. Finally, observing that $\mu(A)=h^d$, $\mu(B)=\int_{h-1}^h x^d dx=\frac{h^{d+1}-(h-1)^{d+1}}{d+1}$ and $\mu(C)=\frac{h^d-(h-1)^d}{2}$, we obtain
\begin{equation*}
h^d=\mu(A)\geq \mu(B)+\mu(C)=\frac{h^{d+1}-(h-1)^{d+1}}{d+1}+\frac{h^d-(h-1)^d}{2},
\end{equation*}
and this shows inequality \eqref{fact1_eq1}.

Now, since Lemma \ref{lem_sum_power_eq} states that $D(y)$ is equal to the $\sum_{h\in [y]}h^d$, \eqref{fact1_eq1} implies that
\begin{equation*}
D(y)=\sum_{h\in [y]}h^d\geq \sum_{h\in [y]}\left(\frac{h^{d+1}-(h-1)^{d+1}}{d+1}+\frac{h^d-(h-1)^d}{2}\right)=\frac{y^{d+1}}{d+1}+\frac{y^d}{2},
\end{equation*}
where the last equality holds by trivial telescopicity.
\end{proof}
\begin{fact}\label{fact2}
$N(y)\leq y^{d+1}\left(\sum_{j\in [d]} \stirling{d}{j}\right)\left(\frac{1}{d+1}+\frac{1}{2y}\right)$
for any integer $y\geq 2$.
\end{fact}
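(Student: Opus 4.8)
The plan is to bound $N(y)=\sum_{j\in[d]}\stirling{d}{j}\frac{y^{j+1}}{j+1}+\sum_{j\in[d]}\stirling{d}{j}y^j$ by matching each of the two sums with one of the two summands appearing in the claimed upper bound. For the first sum, I would use that, for $y\geq 2$ and any integer $j\in[d]$, one has $\frac{y^{j+1}}{j+1}\leq\frac{y^{d+1}}{d+1}$ (the map $t\mapsto y^t/t$ being non-decreasing over the integers $t\geq 1$ when $y\geq 2$, since $g(t+1)/g(t)=\frac{yt}{t+1}\geq 1$). Summing against the Stirling weights yields $\sum_{j\in[d]}\stirling{d}{j}\frac{y^{j+1}}{j+1}\leq\frac{y^{d+1}}{d+1}\sum_{j\in[d]}\stirling{d}{j}$, which is precisely the first term $y^{d+1}\bigl(\sum_{j\in[d]}\stirling{d}{j}\bigr)\frac{1}{d+1}$ of the target.

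It then remains to prove $\sum_{j\in[d]}\stirling{d}{j}y^j\leq\frac{y^d}{2}\sum_{j\in[d]}\stirling{d}{j}$, since $\frac{y^d}{2}=y^{d+1}\cdot\frac{1}{2y}$. Dividing through by $y^d$, this is equivalent to $\sum_{j\in[d]}\stirling{d}{j}\,y^{-(d-j)}\leq\frac12\sum_{j\in[d]}\stirling{d}{j}$. I would split the left-hand side into the three ``top'' terms $j\in\{d,d-1,d-2\}$ and the ``tail'' $1\leq j\leq d-3$ (this is where the hypothesis $d\geq 4$ is used, so that the split makes sense). For the tail, $d-j\geq 3$ and $y\geq 2$ give $y^{-(d-j)}\leq\frac18\leq\frac12$, so the tail contributes at most $\frac12\sum_{j=1}^{d-3}\stirling{d}{j}$. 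For the top terms, $y\geq 2$ gives $y^{0}=1$, $y^{-1}\leq\frac12$, $y^{-2}\leq\frac14$, so they contribute at most $\stirling{d}{d}+\frac{\stirling{d}{d-1}}{2}+\frac{\stirling{d}{d-2}}{4}$.

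The crux is to show $\stirling{d}{d}+\frac{\stirling{d}{d-1}}{2}+\frac{\stirling{d}{d-2}}{4}\leq\frac12\bigl(\stirling{d}{d}+\stirling{d}{d-1}+\stirling{d}{d-2}\bigr)$, i.e., to absorb the unit coefficient of $\stirling{d}{d}$ into one half of $\stirling{d}{d-2}$. This reduces to $\frac{\stirling{d}{d}}{2}\leq\frac{\stirling{d}{d-2}}{4}$, which holds because $\stirling{d}{d}=1$ and, as recorded in Section~\ref{sec_def}, $\stirling{d}{d-2}\geq 7$ for every $d\geq 4$. Adding the top-terms bound to the tail bound gives $\sum_{j\in[d]}\stirling{d}{j}\,y^{-(d-j)}\leq\frac12\sum_{j\in[d]}\stirling{d}{j}$, and combining this with the bound on the first sum yields the Fact.

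The main obstacle --- and the only place where real care is needed --- is precisely this last balancing step: the constant $2$ in $\frac{1}{2y}$ is tight, so the geometric savings $y^{-h}\leq 2^{-h}$ must be spent carefully and nothing can be wasted. The argument closes only because, for $d\geq 4$, the Stirling number $\stirling{d}{d-2}$ is already at least $7$, hence large enough to pay for the $O(1)$ discrepancy caused by the $\stirling{d}{d}=1$ term; this is exactly why Lemma~\ref{lem_lambda_decreasing} had to dispose of the small cases $d\in\{1,2,3\}$ by direct computation before reaching the present regime.
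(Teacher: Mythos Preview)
Your proposal is correct and follows essentially the same approach as the paper: bound the first sum via $\frac{y^{j+1}}{j+1}\leq\frac{y^{d+1}}{d+1}$ for $y\geq 2$, split the second sum into the three top terms $j\in\{d,d-1,d-2\}$ and the tail $j\leq d-3$, use $y\geq 2$ to replace the powers of $y$ by powers of $2$, and close the argument with the balancing inequality $\frac{\stirling{d}{d}}{2}\leq\frac{\stirling{d}{d-2}}{4}$, which holds since $\stirling{d}{d-2}\geq 7$ for $d\geq 4$. The paper's proof is the same in structure and in every key step.
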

\begin{proof}
Given an integer $y\geq 2$, we have
\begin{align}
N(y)&=\sum_{j\in [d]} \stirling{d}{j}\left(\frac{y^{j+1}}{j+1}\right)+\sum_{j\in [d]} \stirling{d}{j}y^j\nonumber\\
&\leq \sum_{j\in [d]}\stirling{d}{j}\left(\frac{y^{d+1}}{d+1}\right)+\sum_{j\in [d]} \stirling{d}{j}y^j\label{fact2_eq1}\\
&=y^{d+1} \left(\sum_{j\in [d]}\stirling{d}{j}\left(\frac{1}{d+1}\right)+\sum_{j\in [d]}\stirling{d}{j}\left(\frac{1}{y^{d+1-j}}\right)\right)\nonumber\\
&= y^{d+1} \left(\left(\sum_{j\in [d]}\stirling{d}{j}\right)\frac{1}{d+1}+\left(\sum_{h=0}^2 \frac{\stirling{d}{d-h}}{y^h}+\sum_{j\in [d-3]}\frac{\stirling{d}{j}}{y^{d-j}}\right)\frac{1}{y}\right)\nonumber\\
&\leq y^{d+1} \left(\left(\sum_{j\in [d]}^{d}\stirling{d}{j}\right)\frac{1}{d+1}+\left(\sum_{h=0}^2 \frac{\stirling{d}{d-h}}{2^h}+\sum_{j\in [d-3]}\frac{\stirling{d}{j}}{8}\right)\frac{1}{y}\right),\label{fact2_eq2}
\end{align}
where \eqref{fact2_eq1} holds since $\frac{y^{j+1}}{j+1}\leq \frac{y^{d+1}}{d+1}$ for any positive integer $j\leq d$ (as function $g(t)=y^t/t$ is increasing in $t\geq 1$, for any fixed $y\geq 2$) and \eqref{fact2_eq2} holds since $y\geq 2$ implies that $y^h\geq 2^h$ for any $h\in \{0,1,2\}$ and $y^{d-j}\geq 8$ for any positive integer $j\leq d-3$.

Now, considering that $\stirling{1}{1}=1$ and $\stirling{d}{d-2}\geq 7$ for $d\geq 4$, we have $\frac{\stirling{d}{d}}{2}=\frac{1}{2}\leq \frac{7}{4}\leq  \frac{\stirling{d}{d-2}}{4}$, which implies
$$
\stirling{d}{d}+\frac{\stirling{d}{d-2}}{4}=\frac{\stirling{d}{d}}{2}+\frac{\stirling{d}{d}}{2}+\frac{\stirling{d}{d-2}}{4}\leq\frac{\stirling{d}{d}}{2}+\frac{\stirling{d}{d-2}}{4}+\frac{\stirling{d}{d-2}}{4}=\frac{\stirling{d}{d}}{2}+\frac{\stirling{d}{d-2}}{2}.
$$
By the above inequality we obtain
\begin{equation}\label{fact2_eq3}
\sum_{h=0}^2 \frac{\stirling{d}{d-h}}{2^h}=\stirling{d}{d}+\frac{\stirling{d}{d-2}}{4}+\frac{\stirling{d}{d-1}}{2}\leq \sum_{h=0}^2 \frac{\stirling{d}{d-h}}{2}.
\end{equation}
By applying \eqref{fact2_eq3} to \eqref{fact2_eq2} we obtain 
\begin{align*}
N(y)&\leq y^{d+1} \left(\left(\sum_{j\in [d]}\stirling{d}{j}\right)\frac{1}{d+1}+\left(\sum_{h=0}^2 \frac{\stirling{d}{d-h}}{2^h}+\sum_{j\in [d-3]}\frac{\stirling{d}{j}}{8}\right)\frac{1}{y}\right)\\
&\leq y^{d+1}\left(\left(\sum_{j\in [d]}\stirling{d}{j}\right)\frac{1}{d+1}+\left(\sum_{h=0}^2 \frac{\stirling{d}{d-h}}{2}+\sum_{j\in [d-3]}\frac{\stirling{d}{j}}{8}\right)\frac{1}{y}\right)\\
&\leq y^{d+1}\left(\left(\sum_{j\in [d]}\stirling{d}{j}\right)\frac{1}{d+1}+\left(\sum_{h=0}^2 \frac{\stirling{d}{d-h}}{2}+\sum_{j\in [d-3]}\frac{\stirling{d}{j}}{2}\right)\frac{1}{y}\right)\\
&\leq y^{d+1}\left(\left(\sum_{j\in [d]}\stirling{d}{j}\right)\frac{1}{d+1}+\left(\sum_{h=0}^2 \stirling{d}{d-h}+\sum_{j\in [d-3]}\stirling{d}{j}\right)\frac{1}{2y}\right)\\
&=y^{d+1}\left(\sum_{j\in [d]} \stirling{d}{j}\right)\left(\frac{1}{d+1}+\frac{1}{2y}\right),
\end{align*}
and this shows the claim.
\end{proof}
By applying Facts \ref{fact1} and \ref{fact2} to \eqref{def_lambda_y} with $y\geq 2$ we obtain:
\begin{equation*}
\Lambda_{d}(y)=\frac{N(y)}{D(y)}\leq \frac{y^{d+1}\left(\sum_{j\in [d]} \stirling{d}{j}\right)\left(\frac{1}{d+1}+\frac{1}{2y}\right)}{y^{d+1}\left(\frac{1}{d+1}+\frac{1}{2y}\right)}=\sum_{j\in [d]} \stirling{d}{j}\leq \sum_{j\in [d]}\left(\frac{1}{j+1}+1\right)\stirling{d}{j}=\Lambda_d(1),
\end{equation*}
and this concludes the proof of the lemma. 
\end{proof}

\subsection{Missing proofs of Section \ref{sec6} (non-increasing latency functions)} \label{sec_A2}

\subsubsection{Proposition \ref{prop:sym:size1} does not extend}
The following example admits two pure Nash equilibria having distinct potentials, illustrating that Proposition \ref{prop:sym:size1}, which holds for non-decreasing latency functions, does not extend to non-increasing latency functions.   
\begin{example} \label{ex_non_incr}
Consider a symmetric and singleton congestion game 
with four players and two resources having the same latency function defined as follows. 
    $$ \latency(h) = \left\{
        \begin{array}{ll}
            2 & \text{, if } 1 \le  h < 4 \\
            1 & \text{, if } h = 4 \\
        \end{array} 
        \right. $$
A state  where all the players play the same resource is a Nash equilibrium with potential $7$. However, there exists another Nash equilibrium in which each resource is played by exactly two  players, 
and 
the potential is equal to $8$.
\end{example}

\subsubsection{Proof of Proposition \ref{prop1}}

\begin{proof} 
Fix a symmetric congestion game ${\cal G}$ with non-increasing latency functions.
Let us first observe that if two distinct strategies $a,b \in \S$ are actually played in a state $\ss$, then one of the following modifications of $\ss$ gives a new state $\ss'$ satisfying $\Phi_{\cal G}(\ss') \le \Phi_{\cal G}(\ss)$: either  all the players playing $a$  
change for $b$, or all the players playing $b$ 
change for $a$. (The other players stick to their strategy.)    

Let $i$ (resp., $j$) be a player such that $\strategy_i=a$ (resp., $\strategy_j=b$). If $\cost_i(\ss) \ge \cost_j(\ss)$, then all the players playing $a$ under $\ss$ can change their strategy for $b$, and their individual cost will not increase. Indeed, the latency functions being non-increasing, the new cost of the deviating players would be at most $\cost_j(\ss)$. If $\cost_i(\ss) < \cost_j(\ss)$, then all the players playing $b$ under $\ss$ can change their strategy for $a$, and their individual cost will decrease since it will be at most $\cost_i(\ss)$ (the fact that every latency function is non-increasing is used again). Since Rosenthal's function is an exact potential, we deduce that $\Phi_{\cal G}(\ss) \ge \Phi_{\cal G}(\ss')$, where $\ss'$ is the state obtained from $\ss$ by grouping the players of $a$ and $b$ either onto  $a$, or onto  $b$.      

We know from the above observation that there always exists a strategy profile $\ss^*$ that minimizes $\Phi_{\cal G}(\ss^*)$ in which all the players adopt the exact same strategy. From an algorithmic viewpoint, one can try every strategy $\strategy \in \S$, and retain the strategy profile $(\strategy,\ldots,\strategy)$ which minimizes Rosenthal's potential.  
\end{proof}

\subsubsection{Proof of Theorem \ref{thm2}}

\begin{proof}  Given a simple graph $G=(V,E)$, a vertex cover $C$ of $G$ is a subset of $V$ such that every edge $e \in E$ has at least one endpoint in $C$. The {\sc vertex cover} problem, which is known to be {NP}-hard \cite{RK72}, asks for a vertex cover of minimum cardinality.

Take a {\sc vertex cover} instance $G=(V,E)$ and create an instance ${\cal G}=\langle \N,\R,(\latency_r)_{r\in \R}\rangle$ of a singleton  non-increasing congestion game. Every edge $e \in E$ is associated with a player $p_e \in \N$. Every vertex $v \in V$ is associated with a resource $r_v \in \R$. For every edge $e=(v,w)$, the strategy space of player $p_e$ is defined as $\{r_v,r_w\}$. The latency function $\latency_r$ of every resource $r$ is such that $\latency_r(1)=1$, and $\latency_r(x)=0$ for all $x\ge 2$.        

We claim that $G$ admits a vertex cover $C$ of cardinality at most $k$ if, and only if, ${\cal G}$ has a state $\ss$ of potential at most $k$.

\noindent $(\Rightarrow)$ For each edge $e=(v,w)$, build a state $\ss$ so that player $p_e$ plays $v$ if $v \in C$, otherwise $p_e$ plays $w$.\footnote{If both $v$ and $w$ are in $C$, then choose one arbitrarily.} Since $C$ is a vertex cover, $\{v,w\} \cap C \neq \emptyset$. For each vertex that is played by at least one player in $\ss$, its contribution to $\Phi_{\cal G}(\ss)$ is 1, whereas unused vertices contribute 0 to $\Phi_{\cal G}(\ss)$. Therefore, $\Phi_{\cal G}(\ss) \le |C| \le k$.        

\noindent $(\Leftarrow)$ For any given state $\ss$ of ${\cal G}$, every player $p_e$ selects an adjacent vertex which covers edge $e$. Thus, $C_{\ss}:=\{v\in V : v=\strategy_{p_e} \mbox{ for some } e \in E\}$ describes a vertex cover of $G$. 
By the definitions of $\latency_r$ and $\Phi_{\cal G}$, $\Phi_{\cal G}(\ss)$ is exactly the number of vertices selected by at least one player. Therefore, $|C_\ss| = \Phi_{\cal G}(\ss) \le k$. 
\end{proof}

\subsubsection{Proof of Theorem \ref{thmHn}}

\begin{proof} 
The algorithm and its analysis follow the line of the greedy strategy used for approximating  the \textsc{set cover} problem (cf. 
\cite[Chapter 2]{VV01}). 

Consider a congestion game ${\cal G}$ where each resource $r$ is seen as a set that may cover player $i$ as long as $r \in \S_i$. At the beginning, the set of available 
players, denoted by $\hat \N$, is equal to $\N$. In addition,  the set of available resources, denoted by $\hat \R$, is equal to $\R$. At every step the {\em cost effectiveness} $\alpha_r$ of every resource $r \in \hat \R$ is computed. Let $\nu_r:=|\{i \in \hat \N : r \in \S_i \}|$ and define $\alpha_r$ as $\left( \sum_{t\in [\nu_r]} \latency_r(t) \right) / \nu_r$.  
Find the resource $r' \in \hat \R$ of minimum cost effectiveness, assign the players of the set $\{i \in \hat \N : r' \in \S_i \}$ to $r'$, and let $\alpha_{r'}$ be the {\em price} of these players. Both $\hat \R$ and $\hat \N$ are updated: $\hat \R \gets \hat \R \setminus \{r'\}$, and $\hat \N \gets \hat \N \setminus \{i \in \hat \N : r' \in \S_i \}$. The procedure is repeated  until every player is assigned to some resource. The resulting strategy profile is denoted by $\ss'$.

Note that since the latency functions are non-increasing, the cost effectiveness of a resource $r$ would not be smaller if we decided to assign it a proper subset of $\{i \in \hat \N : r \in \S_i \}$ instead of the entire set $\{i \in \hat \N : r \in \S_i \}$. For this reason,  every resource is assigned a group of players at most once during the execution of the algorithm.  

Let us analyze the worst-case value of the ratio $\Phi_{\cal G}(\ss')/\Phi_{\cal G}(\ss^*)$ where $\ss^*$ is a strategy profile of minimum potential.   
Rename the players in such a way that if player $i$ is assigned to a resource before player $i'$ during the execution of the algorithm, then $i<i'$ (break ties arbitrarily for the players assigned during the same step). Each assigned resource $r$ contributes to the potential of $\ss'$ by some amount $\sum_{\in [\nu_r]} \latency_{r}(t)$, and the sum of the prices of the players assigned to $r$ is precisely equal to $\sum_{\in [\nu_r]} \latency_{r}(t)$. Therefore, the sum of the prices over $\N$ is equal to $\Phi_{\cal G}(\ss')$.

Take any step of the execution of the algorithm where we assume that players $1$ to $i-1$ were previously assigned, and $\hat \N=\{i,i+1, \ldots, n\}$. The most cost effective resource is chosen to ``cover''  $i$ (and possibly other players of $\hat \N$). For every $r \in \hat \R$, 
let $\nu^*_r$ be the number of players of $\hat \N$ using $r$ under the optimal strategy profile $\ss^*$. By construction we have $\Psi:=\sum_{r\in \hat \R} \sum_{t\in [\nu^*_r]} \latency_r(t) \le \Phi_{\cal G}(\ss^*)$ (the inequality is due to the players in $\N \setminus \hat \N$ ). The minimum cost effectiveness of a resource in $\hat \R$ must be below $\Psi/|\hat \N|$, which is upper  bounded by $\Phi_{\cal G}(\ss^*)/|\hat \N|$. This means that the algorithm selects a resource with cost effectiveness at most $\Phi_{\cal G}(\ss^*)/|\hat \N|$, implying that the price of player $i$ is at most $\Phi_{\cal G}(\ss^*)/|\hat \N|=\Phi_{\cal G}(\ss^*)/(n-i+1)$.  
It follows that $\Phi_{\cal G}(\ss')=\sum_{i\in [n]} price(i)\le \sum_{i\in [n]} \Phi_{\cal G}(\ss^*)/(n-i+1)=\Phi_{\cal G}(\ss^*) H_{n}$.

One can see that the analysis is tight.  Suppose $\R$ consists of $n+1$ resources $r_0,r_1,\ldots,r_{n}$, and $\S_i=\{r_0,r_i\}$ for every player $i \in \N$. For $i\in [n]$, we have $\latency_{r_i}(1)=1/i$, and $\latency_{r_i}(t)=0$ for all $t>1$. Concerning $r_0$, we have $\latency_{r_0}(1)=1+\epsilon$ for some small but positive $\epsilon$, and $\latency_{r_0}(t)=0$ for all $t>1$. 
During each step $k \ge 1$ of the algorithm, the most cost effective resource is $r_{n+1-k}$ whose cost effectiveness is $1/(n+1-k)$; the cost effectiveness of $r_0$ is $(1+\epsilon)/(n+1-k)$. Thus, player $n+1-k$ is assigned to $r_{n+1-k}$ at each step $k \ge 1$. The potential of $\ss'$ is $\sum_{k\in [n]} 1/(n-k+1)=\sum_{k\in [n]} 1/k=H_{n}$. The state $\ss^*$ in which all the players are assigned to $r_0$ has potential $\sum_{k\in [n]} \latency_{r_0}(k)=\latency_{r_0}(1)=1+\epsilon$. The ratio $\Phi_{\cal G}(\ss')/\Phi_{\cal G}(\ss^*)=H_{n}/(1+\epsilon)$ tends to $H_{n}$ as $\epsilon$ goes to 0. 

It remains to prove that $H_{n}$ is the best 
approximation ratio unless P $=$ NP. An instance of \textsc{set cover} is given by a ground set $\Omega$ and a collection $\mathcal{C}$ of subsets of $\Omega$ such that $\bigcup_{C \in \mathcal{C}} C=\Omega$. Each set $C$ of $\mathcal{C}$ has a weight $w_C \in \RP$. The problem is to find $\mathcal{C}' \subseteq \mathcal{C}$ such that $\bigcup_{C \in \mathcal{C}'} C=\Omega$ and the total weight $\sum_{C \in \mathcal{C}'} w_C$ is minimum. Any $\rho$-approximation algorithm  
for minimizing Rosenthal's potential in singleton non-increasing congestion games gives a $\rho$-approximation algorithm for \textsc{set cover}. Take an instance of \textsc{set cover} and create an instance ${\cal G}$ of the game where each $C \in \mathcal{C}$ is associated with a resource $r_C$ whose latency function $\latency_{C}$ verifies $\latency_{C}(1)=w_C$, and $\latency_{C}(h)=0$ for all $h>1$. Each $i \in \Omega$ is associated with a player $i$ whose strategy space $\S_i$ is defined as $\{\{r_C\} : i \in C\}$. Then, every state $\ss$ of the game with potential $\Phi_{\cal G}(\ss)$ corresponds to a set cover $\mathcal{C}'$ of total weight $\Phi_{\cal G}(\ss)$, where $C \in \mathcal{C}' \Leftrightarrow (\exists i \in \N$ such that $\strategy_i=r_C)$. Provided that approximating \textsc{set cover} to within factor $(1-\varepsilon ) \ln (|\Omega|)$ is NP-hard for every $\varepsilon > 0$ \cite{DS14}, the ratio $H_{n}$ is best possible.
\end{proof}

\end{document}